\documentclass[a4paper,UKenglish]{llncs}

\sloppy 
\usepackage[firstpage]{draftwatermark}
\SetWatermarkText{\hspace*{5.0in}\raisebox{6.6in}{\includegraphics[scale=0.1]{}}}
\SetWatermarkAngle{0} 
\usepackage[T1]{fontenc}

\usepackage[utf8]{inputenc}

\usepackage{microtype}
\usepackage{lmodern}

\usepackage{pgfplots}

\usepackage{amssymb,amsmath}

\usepackage{xspace,relsize,stmaryrd}
\usepackage{amssymb,amsmath}
\usepackage{mathpartir}
\usepackage{bussproofs}
\usepackage{multirow}


\allowdisplaybreaks[2]

\pagestyle{plain}

\usepackage{tikz}
 \usetikzlibrary{trees}
 \usetikzlibrary{shapes}
 \usetikzlibrary{fit}
 \usetikzlibrary{shadows}
 \usetikzlibrary{backgrounds}
 \usetikzlibrary{arrows,automata}

\tikzset{
   n/.style= {circle,fill,inner sep=1.5pt,node distance=2cm}
  ,acc/.style={circle,draw,inner sep=3pt,node distance=2cm}
  ,phantom/.style={circle},
  ,arr/.style={->, >=stealth, semithick, shorten <= 3pt, shorten >= 3pt}
}

\setlength{\itemsep}{0cm}
\newcounter{blubber}

\makeatletter
\def\moverlay{\mathpalette\mov@rlay}
\def\mov@rlay#1#2{\leavevmode\vtop{%
   \baselineskip\z@skip \lineskiplimit-\maxdimen
   \ialign{\hfil$\m@th#1##$\hfil\cr#2\crcr}}}
\newcommand{\charfusion}[3][\mathord]{
    #1{\ifx#1\mathop\vphantom{#2}\fi
        \mathpalette\mov@rlay{#2\cr#3}
      }
    \ifx#1\mathop\expandafter\displaylimits\fi}
\makeatother

\newcommand{\dotcup}{\charfusion[\mathbin]{\cup}{\cdot}}

\newcommand\pfun{\mathrel{\ooalign{\hfil$\mapstochar\mkern5mu$\hfil\cr$\to$\cr}}}

\newcommand{\al}{\mathsf{al}}
\newcommand{\ad}{\mathsf{ad}}

\newcommand{\defer}{\mathsf{dfr}}

\newcommand{\FLphi}{\mathbf F}

\newcommand\ExpTime{$\textsc{ExpTime}$\xspace}

\newcommand{\Ni}{\noindent}
\newcommand{\Sem}[1]{{[\![#1]\!]}}
\newcommand{\hearts}{\heartsuit}

\newcommand{\sem}[1]{[\![#1]\!]}

\newcommand{\Pow}{\mathcal{P}}

\newcommand{\lO}{\mathcal{O}}
\newcommand{\infrule}[2]{\frac{#1}{#2}}

\newcommand{\ov}[1]{\overline{#1}}

\spnewtheorem{thm}[theorem]{Theorem}{\bfseries}{\itshape}
\spnewtheorem{cor}[theorem]{Corollary}{\bfseries}{\itshape}
\spnewtheorem{cnj}[theorem]{Conjecture}{\bfseries}{\itshape}
\spnewtheorem{lem}[theorem]{Lemma}{\bfseries}{\itshape}
\spnewtheorem{lemdefn}[theorem]{Lemma and Definition}{\bfseries}{\itshape}
\spnewtheorem{prop}[theorem]{Proposition}{\bfseries}{\itshape}
\spnewtheorem{defn}[theorem]{Definition}{\bfseries}{\upshape}
\spnewtheorem{rem}[theorem]{Remark}{\bfseries}{\upshape}
\spnewtheorem{notation}[theorem]{Notation}{\bfseries}{\upshape}
\spnewtheorem{expl}[theorem]{Example}{\bfseries}{\upshape}
\spnewtheorem{thmdefn}[theorem]{Theorem and Definition}{\bfseries}{\itshape}
\spnewtheorem{propdefn}[theorem]{Proposition and Definition}{\bfseries}{\itshape}
\spnewtheorem{assumption}[theorem]{Assumption}{\bfseries}{\upshape}
\spnewtheorem{algorithm}[theorem]{Algorithm}{\bfseries}{\upshape}

 \renewenvironment{theorem}{\begin{thm}}{\end{thm}}
 \renewenvironment{corollary}{\begin{cor}}{\end{cor}}
 \renewenvironment{lemma}{\begin{lem}}{\end{lem}}
 
 \renewenvironment{definition}{\begin{defn}}{\end{defn}}
 
 \renewenvironment{example}{\begin{expl}}{\end{expl}}

\begin{document}

\bibliographystyle{myabbrv}

\title{Permutation Games\\ for the Weakly Aconjunctive $\mu$-Calculus\protect}

\author{Daniel Hausmann \and Lutz Schr\"oder \and Hans-Peter Deifel}
\institute{Friedrich-Alexander-Universit\"{a}t
Erlangen-N\"urnberg, Germany}

\maketitle

\begin{abstract}
  We introduce a natural notion of limit-deterministic parity automata
  and present a method that uses such automata to construct
  satisfiability games for the weakly aconjunctive fragment of the
  $\mu$-calculus. To this end we devise a method that determinizes
  limit-deterministic parity automata of size $n$ with $k$ priorities
  through limit-deterministic Büchi automata to deterministic parity
  automata of size $\mathcal{O}((nk)!)$ and with $\mathcal{O}(nk)$
  priorities.  The construction relies on limit-determinism to
  avoid the full complexity of the Safra/Piterman-construction by
  using partial permutations of states in place of Safra-Trees. By
  showing that limit-deterministic parity automata can be used to
  recognize unsuccessful branches in pre-tableaux for the 
	weakly aconjunctive $\mu$-calculus, we obtain satisfiability
	games of size $\mathcal{O}((nk)!)$ with $\mathcal{O}(nk)$ 
	priorities for weakly
  aconjunctive input formulas of size $n$ and alternation-depth $k$. A
  prototypical implementation that employs a tableau-based global
  caching algorithm to solve these games on-the-fly shows promising
  initial results.
\end{abstract}

\section{Introduction}

The modal $\mu$-calculus~\cite{Kozen83} is an expressive logic for
reasoning about concurrent systems. Its satisfiability problem is
\ExpTime-complete~\cite{EmersonJutla99}. Due to nesting of fixpoints,
the semantic structure of the $\mu$-calculus is quite involved, which
is reflected in the high degree of sophistication of reasoning
algorithms for the $\mu$-calculus.  One convenient modular approach is
the definition of suitable \emph{satisfiability games}
(e.g.~\cite{FriedmannLange13a}); solving such games (i.e. computing
their winning regions) then amounts to deciding the satisfiability of
the input formulas. A standard method for obtaining satisfiability
games is to first construct a \emph{tracking automaton} that accepts
the \emph{bad branches} in a pre-tableau for the input formula, i.e.\
those that infinitely defer satisfaction of a least fixpoint; this
automaton then is determinized and complemented, and the
satisfiability game is built over the carrier set
of the resulting automaton.
The moves in the game are those transitions from
the automaton that correspond to applications of tableau-rules; the existence
of a winning
strategy in this game ensures the existence of a
model, i.e. a locally coherent structure that does not contain bad
branches.  As they typically incur exponential blowup, good
determinization procedures for automata on infinite words play a
crucial role in standard decision procedures for the satisfiability
problem of the $\mu$-calculus and its fragments; in particular, better
determinization procedures lead to smaller satisfiability games which
are easier to solve.

The \emph{weakly aconjunctive} $\mu$-calculus~\cite{Kozen83,Walukiewicz00} restricts
occurrences of recursion variables in conjunctions but is still quite
expressive, e.g.\ can define winning regions in parity games with bounded number of priorities~\cite{DawarGraedel08}. The key
observation for the present paper is that in the weakly aconjunctive case,
pre-tableau branches are made `bad' by a single formula; this implies that
the tracking automaton for such formulas is
\emph{limit-deterministic}, i.e. that is is sufficient to
deterministically track a single formula from some point on. This
motivates a notion of \emph{limit-deterministic parity automata} in
which all accepting runs are deterministic from some point on.
Because the nondeterminism is restricted to finite prefixes of
accepting runs in such automata, they can be determinized in a simpler
way than unrestricted parity automata. We present a reformulation of a
recent determinization method for limit-deterministic \emph{B\"uchi}
automata~\cite{EsparzaKRS17}. The method is inspired by, but
significantly less involved than the more general Safra/Piterman
construction~\cite{Safra88,Piterman07}, essentially due to the fact
that the tree structure of Safra trees collapses, leaving only the
permutation structure. The resulting parity
automaton 
can thus be described as a \emph{permutation automaton}. The method
yields deterministic parity automata with $\mathcal{O}(n!)$
states, compared to $\mathcal{O}((n!)^2)$ in the Safra/Piterman
construction. Crucially, we show that we obtain a similarly simplified
determinization for limit-deterministic \emph{parity} automata by
translating into B\"uchi automata.


As indicated above, limit-deterministic parity automata are able to
recognize bad branches in pre-tableaux for weakly 
aconjunctive $\mu$-calculus formulas. Employing them in the 
standard construction of
satisfiability games, we obtain \emph{permutation games} in which
nodes from the pre-tableau are annotated with a partial permutation
(i.e. a non-repetitive list) of (levelled) formulas. A parity
condition is used to detect indices in the permutation that are active
infinitely often without ever being removed from the permutation. The
resulting parity games are of size $\mathcal{O}((nk)!)$ and have
$\mathcal{O}(nk)$ priorities; as a side result, we thus obtain a new
bound $\mathcal{O}((nk)!)$ on model size for weakly aconjunctive formulas.

The resulting decision procedure generalizes to the weakly
aconjunctive \emph{coalgebraic} $\mu$-calculus, thus covering also,
e.g., probabilistic and alternating-time versions of the
$\mu$-calculus. The generic algorithm has been implemented as an
extension of the \emph{Coalgebraic Ontology Logic Reasoner}
(COOL)~\cite{GorinEA14,HausmannEA16}.  Our implementation constructs
and solves the presented permutation games \emph{on-the-fly}, possibly
finishing satisfiability proofs early, and shows promising initial
results.  The content of the paper is structured as follows: We
describe the determinization of limit-deterministic automata in
Section~\ref{section:det} and the construction of permutation games in
Section~\ref{section:games}, and discuss implementation and evaluation
in Section~\ref{section:cool}.

\medskip\noindent{\textbf{\sffamily Related Work}}\quad Liu and
Wang~\cite{LiuWang09} give a tighter estimate $\mathcal{O}((n!)^2)$
for the number of states in Piterman's
determinization~\cite{Piterman07}. Schewe~\cite{Schewe09} simplifies
Piterman's construction (establishing the same bound as Liu and Wang).
Tian and Duan~\cite{TianDuan14} further improve Schewe's
construction. Fisman and Lustig~\cite{FismanLustig15} present a
modularization of B\"uchi determinization that is aimed mainly at
easing understanding of the construction. Parity automata can be
determinized by first converting them to B\"uchi automata and then
applying B\"uchi determinization. Schewe and
Varghese~\cite{ScheweVarghese14} address the direct determinization of
parity automata (via Rabin automata), and prove optimality within a
small constant factor, and even absolute optimality for the B\"uchi
subcase. All these constructions and estimates concern unrestricted
B\"uchi or parity automata. Recently, Safra-less determinization of
limit-deterministic Büchi automata has been described in the context
of controller synthesis for LTL~\cite{EsparzaKRS17}; the
determinization method that we present in Section 2.2.\ has been
devised independently from~\cite{EsparzaKRS17} but employs a very
similar construction (yielding essentially the same results on the
complexity of the construction).

The use of games in $\mu$-calculus satisfiability checking goes back
to Niwi\'nski and Walukiewicz~\cite{NiwinskiWalukiewicz96} and has
since been extended to the
unguarded $\mu$-calculus~\cite{FriedmannLange13a} and 
the \emph{coalgebraic}
$\mu$-calculus~\cite{CirsteaEA09}. Game-based procedures for the
relational $\mu$-calculus have been implemented in
MLSolver~\cite{FriedmannLange10MLSolver}, and for the alternation-free
coalgebraic $\mu$-calculus in COOL~\cite{HausmannEA16}.

\section{Determinizing Limit-Deterministic Automata}\label{section:det}

\subsection{Limit-deterministic automata}
We recall the basics of parity automata: A \emph{parity automaton} is
a tuple $\mathcal{A}=(V,\Sigma,\delta,u_0,\alpha)$ where $V$ is a set
of \emph{states}, $\Sigma$ is an \emph{alphabet},
$\delta\subseteq V\times\Sigma\times V$ is a \emph{transition relation}, $u_0\in V$ is
an \emph{initial state}, and $\alpha:\delta\to\mathbb{N}$ is a
\emph{priority function} that assigns natural numbers to \emph{transitions}
(assigning priorities to transitions rather than states yields a slightly more
succinct notion of automata while retaining the computational properties of standard
parity automata~\cite{ScheweVarghese14}). For $(v,a)\in V\times\Sigma$, we write $\delta(v,a)=\{u\mid (v,a,u)\in\delta\}$. The
\emph{index} $\mathsf{idx}(\mathcal{A})=\max\{\alpha(t)\mid t\in \delta\}$
of a parity automaton $\mathcal{A}$ is its maximal priority.  A
\emph{run} $\rho=v_0v_1\ldots$ of $\mathcal{A}$ on an infinite word
$w=a_0 a_1\ldots\in \Sigma^\omega$ starting at $v\in V$ is a (possibly
infinite) sequence of states $v_i$ such that $v_0=v$ and for all
$i\geq 0$, $v_{i+1}\in\delta(v_i,a_i)$.  We see runs~$\rho$ or words~$w$ as functions from natural numbers to states $\rho(i)=v_i\in V$ or
letters $w(i)=a_i\in\Sigma$. For a run
$\rho$ on a word $w$, we define the according sequence $\mathsf{trans}(\rho)$ of transitions by
$\mathsf{trans}(\rho)(i)=(\rho(i),w(i),\rho(i+1))$.
We denote the set of all runs of
$\mathcal{A}$ on a word $w$ starting at $v$ by
$\mathsf{run}(\mathcal{A},v,w)$, or just by
$\mathsf{run}(\mathcal{A},w)$ if $v=u_0$.  A run~$\rho$ of
$\mathcal{A}$ on a word $w$ is \emph{accepting} if the highest priority that occurs
infinitely often in it (notation:
$\max(\mathsf{Inf}(\alpha\circ\mathsf{trans}(\rho)))$; 
we generally write $\mathsf{Inf}(s)$ for the set of elements occurring infinitely
often in a sequence~$s$) is even.  A parity automaton
$\mathcal{A}$ \emph{accepts} an infinite word $w$ if
$\mathsf{run}(\mathcal{A},w)$ contains an accepting run, and we denote
by $L(\mathcal{A})\subseteq \Sigma^\omega$ the set of all words that
are accepted by $\mathcal{A}$.  

Given a state $v\in V$ and a letter $a\in\Sigma$, we define
$\delta|_{v,a}=\{(v,a,u)\mid u\in\delta(v,a)\}$.
Given a set $\gamma\subseteq \delta$ of transitions,
a state $v\in V$, a set of states $U\subseteq V$ and a letter $a\in\Sigma$,
we put $\gamma(U,a)=\bigcup\{\gamma(v,a)\mid v\in U\}$;
given a finite word
$w=a_0\ldots a_n$, we then recursively define
$\gamma(v,w)=\gamma(\gamma(v,a_0),a_1\ldots a_n)$, obtaining the
set of all states reachable from $v$ when reading $w$ while only
using transitions from $\gamma$. For $U\subseteq V$, $\gamma\subseteq \delta$ and $w\in\Sigma^*$, we
put $\gamma(U,w)=\bigcup\{\gamma(u,w)\mid u\in U\}$.
Furthermore, we define the set of states that are \emph{reachable}
from a node $v\in V$ using transitions from $\gamma$ as
$\mathsf{reach}_\gamma(v)=\bigcup\{\gamma(v,w)\mid w\in\Sigma^*\}$;
we extend this notation to sets of nodes, putting
$\mathsf{reach}_{\gamma}(U)=\bigcup\{\mathsf{reach}_{\gamma}(u)\mid u\in U\}$ for $U\subseteq V$.
If $\gamma=\delta$, then we omit the subscripts.  A state
$v\in V$ is said to be \emph{deterministic} (in $\gamma\subseteq \delta$)
if it has at most one ($\gamma$-)successor for each
letter $a\in \Sigma$. A set $U\subseteq V$ is deterministic (in
$\gamma\subseteq \delta$) if every state $v\in U$ is deterministic (in $\gamma$).
The automaton $\mathcal{A}$ is said to be \emph{deterministic} if $V$
is deterministic; the transition relation in deterministic automata
hence is a partial function (since such automata can be transformed to
equivalent automata with total transition function, this definition
suffices for purposes of determinization).  We put
$\alpha(i)=\{t\in \delta\mid \alpha(t)= i\}$ and
$\alpha_\leq(i)=\{t\in \delta\mid \alpha(t)\leq i\}$. 

 A \emph{B\"uchi
  automaton} is a parity automaton with only the priorities $1$ and
$2$; the set of \emph{accepting transitions} then is
$F=\alpha(2)$ and a run is accepting if it passes infinitely many
accepting transitions.
 For B\"uchi automata, we assume
w.l.o.g. that every transition $t\in F$ is part of a cycle.
We use the abbreviations (N/D)PA, (N/D)BA to denote the different types of automata.

Our notion of limit-determinism of automata is defined as a semantic property:

\begin{definition}[Limit-deterministic parity automata]
A PA $\mathcal{A}=(V,\Sigma,\delta,u_0,\alpha)$ is \emph{limit-deterministic}
if there is, for each word $w$ and each accepting run $\rho\in\mathsf{run}(\mathcal{A},w)$, 
a number $i$ such that for all $j\geq i$, 
$\delta|_{\rho(j),w(j)}\cap \alpha_\leq(l)=\{\mathsf{trans}(\rho)(j)\}$, where $l=\max(\mathsf{Inf}(\alpha\circ\mathsf{trans}(\rho)))$.
\end{definition}
If $\mathcal{A}$ is a BA, then 
we have $\max(\mathsf{Inf}(\alpha\circ\mathsf{trans}(\rho)))=2$
for every accepting run $\rho$; as $\alpha_\leq(2)=\delta$, the above definition
instantiates to requiring the existence of a number~$i$ such that for all $j\geq i$, $\delta(\rho(j),w(j))=\{\rho(j+1)\}$.

\begin{definition}[Compartments]
Given a PA $\mathcal{A}=(V,\Sigma,\delta,u_0,\alpha)$ with $k$ priorities,
and an even number $l\leq k$, the \emph{$l$-compartment} $C_l(t)$ of a transition $t\in\alpha(l)$
is the set $\mathsf{reach}_{\alpha_\leq(l)}(\pi_3(t))$ where $\pi_3$ projects transitions $t=(v,a,u)$ to their target nodes~$u$.
If $l$ is irrelevant, then we refer to $l$-compartments
just as compartments. The \emph{size} of a compartment $C$ is just $|C|$.
A compartment $C$ is \emph{internally deterministic} if for each $v\in C$ and all $a\in \Sigma$, $|\delta(v,a)\cap C|\leq 1$.
\end{definition}
Note that the union of all $l$-compartments is
$\mathsf{reach}_{\alpha_\leq(l)}(\pi_3[\alpha(l)])$.
 Compartments allow for a syntactic characterization of limit-determinism:

\begin{lemma}\label{lem:evdet}
A PA is limit-deterministic 
if and only if all its compartments are internally deterministic.
\end{lemma}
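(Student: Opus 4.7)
The plan is to bridge the semantic condition (limit-determinism) and the combinatorial condition (internal determinism of compartments) via the fact that any accepting run $\rho$, with $l := \max(\mathsf{Inf}(\alpha\circ\mathsf{trans}(\rho)))$, is eventually trapped inside a single $l$-compartment; this trapping is the pivot in both directions.

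For the \emph{if}-direction I would fix an accepting run $\rho$ on a word $w$, compute $l$, and pick an index $i$ such that $\alpha(\mathsf{trans}(\rho)(j))\le l$ for all $j\ge i$ and additionally $t:=\mathsf{trans}(\rho)(i)\in\alpha(l)$; both are possible by choice of $l$. Then for every $j\ge i+1$ the state $\rho(j)$ is $\alpha_\leq(l)$-reachable from $\pi_3(t)$, hence lies in $C:=C_l(t)$. Any further transition $t'\in\delta|_{\rho(j),w(j)}\cap\alpha_\leq(l)$ distinct from $\mathsf{trans}(\rho)(j)$ would supply a second $\alpha_\leq(l)$-successor of $\rho(j)$ in $C$, contradicting internal determinism of $C$. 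So $i+1$ witnesses limit-determinism for $\rho$.

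For the converse I would argue contrapositively: assume some $l$-compartment $C=C_l(t_0)$ fails internal determinism, with $v\in C$ having two distinct $\alpha_\leq(l)$-successors $u_1,u_2\in C$ on some letter $a$, via transitions $t_1,t_2$. I would construct an accepting run $\rho$ on a suitable word that passes through $v$ under input letter $a$ at infinitely many positions while keeping $l$ as its maximum infinitely-often priority; at every such position both $t_1$ and $t_2$ lie in $\delta|_{v,a}\cap\alpha_\leq(l)$, blocking any tail witness for limit-determinism. The run is built by composing (i) an initial path from $u_0$ to the source $s$ of $t_0$, followed by $t_0$, with (ii) an infinitely repeated loop $\pi_3(t_0)\leadsto v\xrightarrow{\;t_1\;}u_1\leadsto s\xrightarrow{\;t_0\;}\pi_3(t_0)$ whose non-$t_0$ segments stay within $\alpha_\leq(l)$.

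The main obstacle is the existence of the low-priority return segment from $u_1$ back to $s$: that $v$ and $u_1$ belong to $C$ only gives $\alpha_\leq(l)$-paths \emph{into} them from $\pi_3(t_0)$. I would invoke the standing well-formedness convention of the paper -- stated explicitly for B\"uchi automata, that every accepting transition lies on a cycle, and adopted analogously for parity automata, since unproductive transitions can be pruned without changing the language -- so that $t_0$ sits on an $\alpha_\leq(l)$-accepting cycle into which a detour $\leadsto v\xrightarrow{\;t_1\;}u_1\leadsto$ is inserted. The resulting $\rho$ is then accepting with max-inf priority $l$ and is ambiguous at every visit of $(v,a)$, completing the contrapositive and hence the equivalence.
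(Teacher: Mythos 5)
Your ``if'' direction is correct and is the natural argument: once an accepting run $\rho$ with $l=\max(\mathsf{Inf}(\alpha\circ\mathsf{trans}(\rho)))$ has passed a priority-$l$ transition $t$ after the point from which all its priorities are $\leq l$, every later state lies in $C_l(t)$, and the target of any second transition in $\delta|_{\rho(j),w(j)}\cap\alpha_\leq(l)$ would lie in $C_l(t)$ as well (compartments are closed under $\alpha_\leq(l)$-reachability), contradicting $|\delta(\rho(j),w(j))\cap C_l(t)|\leq 1$. (The paper states this lemma without proof, so there is no official argument to compare against.)

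The ``only if'' direction, however, has a genuine gap, and you have put your finger on exactly the step that fails. Membership of $v,u_1,u_2$ in $C=C_l(t_0)=\mathsf{reach}_{\alpha_\leq(l)}(\pi_3(t_0))$ only yields $\alpha_\leq(l)$-paths \emph{into} these states from $\pi_3(t_0)$; it gives no path of any kind from $u_1$ back to the source of $t_0$, nor back to any accepting cycle. The convention you invoke --- every accepting transition lies on a cycle, which the paper states only for B\"uchi automata --- does not close this hole even if extended to parity automata: it guarantees some cycle through $t_0$, but that cycle need not visit $v$ or admit your detour through $t_1$. Indeed, with the definitions read literally the implication can fail: take $t_0=(s,a,c)$ of priority $2$, a priority-$1$ transition $c\to s$ (so that runs looping on $s,c$ are accepting and deterministic there), and additional priority-$1$ transitions $c\to v$, $v\xrightarrow{a}u_1$, $v\xrightarrow{a}u_2$ with $u_1,u_2$ dead ends; the automaton is limit-deterministic, yet $v\in C_2(t_0)$ has two $a$-successors in $C_2(t_0)$. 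A further wrinkle: internal determinism as defined counts \emph{all} $a$-successors of $v$ lying in $C$, so your witnessing transitions $t_1,t_2$ need not belong to $\alpha_\leq(l)$ at all, which already undermines the claim that both of them obstruct the limit-determinism condition at visits of $(v,a)$. As it stands, this direction requires either an additional hypothesis (e.g.\ that every compartment state lies on an accepting $\alpha_\leq(l)$-cycle through a priority-$l$ transition, after pruning unproductive transitions) or a reformulation of internal determinism in terms of $\alpha_\leq(l)$-transitions exercised by accepting runs; your run construction cannot be completed from the stated hypotheses alone.
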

\begin{corollary}\label{lem:evdetdec}
  It is decidable in polynomial time whether a given automaton is limit-deterministic.
\end{corollary}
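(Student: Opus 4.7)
The plan is to reduce the decision procedure directly to Lemma~\ref{lem:evdet}: $\mathcal{A}=(V,\Sigma,\delta,u_0,\alpha)$ is limit-deterministic if and only if every compartment is internally deterministic. Since both enumerating compartments and testing internal determinism are graph-theoretic conditions on $\mathcal{A}$, it suffices to show that they can be carried out in polynomial time.

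First I would enumerate the compartments. By definition, each compartment is of the form $C_l(t)=\mathsf{reach}_{\alpha_\leq(l)}(\pi_3(t))$ for an even priority $l$ and a transition $t\in\alpha(l)$, so their number is bounded by $|\delta|$. For each such pair $(l,t)$, computing $C_l(t)$ amounts to a standard graph reachability search starting from $\pi_3(t)$ in the subgraph of $\mathcal{A}$ whose edge set is the restricted transition relation $\alpha_\leq(l)=\{t'\in\delta\mid\alpha(t')\leq l\}$. This search runs in time $O(|V|+|\delta|)$, and the restriction $\alpha_\leq(l)$ is trivially extracted from $\alpha$.

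Next I would test each precomputed compartment $C$ for internal determinism: for every $v\in C$ and every $a\in\Sigma$, verify that $|\delta(v,a)\cap C|\leq 1$. Using a membership table for $C$, this scan costs $O(|V|\cdot|\Sigma|\cdot|\delta|)$ per compartment. Multiplying by the at most $|\delta|$ many compartments yields a polynomial overall bound, and $\mathcal{A}$ is limit-deterministic precisely if every such test succeeds.

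The work is essentially routine once Lemma~\ref{lem:evdet} is granted; the only conceptual point to appreciate is that although the definition of limit-determinism quantifies over all infinite words and accepting runs, the lemma turns it into a finite, local check on the static compartment structure, after which standard graph algorithmics suffice. No step is a genuine obstacle.
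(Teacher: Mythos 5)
Your proposal is correct and follows the same route the paper intends: the corollary is drawn directly from Lemma~\ref{lem:evdet}, with compartments computed by standard reachability in the restricted transition relation $\alpha_\leq(l)$ (at most one compartment per transition) and each checked locally for internal determinism, all in polynomial time. The paper gives no separate written proof precisely because this is the obvious argument, so nothing further is needed.
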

Lemma~\ref{lem:evdet} specializes to BA as follows: we have
$\alpha(0)=\emptyset$, $\alpha_\leq(2)=\delta$ and 
$\alpha(2)=F$, so that
the union of all $0$-compartments is empty and that of all
$2$-compartments is $\mathsf{reach}(\pi_3[F])$; thus a BA is
limit-deterministic if and only if $\mathsf{reach}(\pi_3[F])$ is deterministic. 
Such B\"uchi automata are also called
\emph{semi-deterministic}~\cite{CourcoubetisY95}.

\subsection{Determinizing Limit-Deterministic B\"uchi Automata}

The Safra/Piterman construction~\cite{Safra88,Piterman07} determinizes
B\"uchi automata by means of so-called Safra trees, i.e.  trees whose
nodes are labelled with sets of states of the input automaton such
that the label of a node is a proper superset of the union of all its
children's labels. Additionally, the nodes are ordered by their age
and upon each transition between Safra trees, the ages of the oldest
nodes that are active and/or removed during this transition
determine the priority of the new Safra tree.  In its original
formulation, the Safra/Piterman construction adds new child nodes to
the graph that are labelled with the accepting states in their
parent's label. We observe that this step can be modified slightly --
without affecting the correctness of the construction -- by letting
every accepting state from the parent's label receive its own separate
child node; then the labels of newly created nodes are always
singletons.  Limit-determinism of the input automaton then implies
that the node labels also \emph{remain} singletons. Since singleton
nodes do not have children in Safra trees, this leads to the collapse
of their tree structure; the resulting data structure is essentially a
partial permutation, i.e. a non-repetitive list, of states (ordered by
their age).  The arising modified Safra/Piterman construction for the
limit-deterministic case boils down to the following method, which a)
has a relatively short presentation and a simpler correctness proof
than the full Safra/Piterman construction, and b) results in asymptotically
smaller automata; the underlying idea of the construction has first
been described in the context of controller synthesis for
LTL~\cite{EsparzaKRS17}.

\begin{definition}[Partial permutations]
  Given a set $U$ of states, let $\mathsf{pperm}(U)$ denote the set of
  \emph{partial permutations} over $U$, i.e.  the set of
  non-repetitive lists $l=[v_1,\ldots,v_n]$ with $v_i\neq v_j$ for
  $i\neq j$ and $v_i\in U$, for all $1\leq i\leq n$. We denote the
  $i$-th element in $l$ by $l(i)=v_i$,
	the empty partial permutation by $[\,]$ and the
  length of a partial permutation $l$ by $|l|$.
\end{definition}

\begin{definition}[Determinization of limit-deterministic BA]\label{defn:permdet}
Fix a limit-deterministic BA $\mathcal{A}=(V,\Sigma,\delta,u_0,F)$, and put $Q=\mathsf{reach}(\pi_3[F])$, $\ov{Q}=V\setminus Q$, $q=|Q|$. Define the DPA 
$\mathcal{B}=(W,\Sigma,\delta',w_0,\alpha)$ by putting $W=\mathcal{P}(\ov{Q})\times \mathsf{pperm}(Q)$,
$w_0 = (\{u_0\},[\,])$ if $u_0\in \overline{Q}$, $w_0=(\emptyset,[u_0])$ if $u_0\in Q$
and for $g=(U,l)\in W$ and $a\in\Sigma$, $\delta'(g,a)=h$, where
$h=(\delta(U,a)\cap \ov{Q},l')$ and
where $l'$ is constructed from $l=[v_1,\ldots,v_m]$ as follows:
\begin{enumerate}
\item Define a list $t$ of length $m$ over $Q\cup\{*\}$ (with $*$
  representing undefinedness) in which $t(i)=w$ if
  $\delta(v_i,a)=\{w\}$, and $t(i)=*$ if $\delta(v_i,a)=\emptyset$.
\item For $j<k$ and $t(j)=t(k)$, put $t(k)=*$.
\item
Remove undefined entries in $t$, formally: for each $1\leq i\leq |t|$,
if $t(i)=*$, then iteratively put $t(j)=t(j+1)$ for each $i\leq j\leq |t|$, starting at $i$.
\item
For any $w\in\delta(U,a)\cap Q$ that does not occur in $t$, add $w$ to the end of $t$.
If there are several such $w$, the order in which they are added to $t$ is irrelevant.
\item Put $l'=t$.
\end{enumerate}
Temporarily, $t$ may contain duplicate or undefined entries, but Steps~2.\ and~3.\
ensure that in the end, $t$ is a partial permutation of length at most $q$.  Let $r$ (for
`removed') denote the lowest index $i$ such that $t(i)=*$ after
Step~2.  Let $a$ (for `active') denote the lowest index $i$ such that
$(l(i),a,l'(i))\in F$. If $r>|l'|$ and there is no $i$ with $(l(i),a,l'(i))\in F$, then put
$\alpha(g,a,h)=1$.
Otherwise, put
\begin{align*}
\alpha(g,a,h)=\begin{cases}
2(q-r)+3&\text{if }r\leq a\\
2(q-a)+2&\text{if }r>a.
\end{cases}
\end{align*}

\end{definition}

\begin{theorem}\label{thm:edba}
We have $L(\mathcal{A})=L(\mathcal{B})$, and $\mathcal{B}$ has at most $2n+1$ priorities; for $n\geq 4$,
we have $|W| \leq n!e$.
\end{theorem}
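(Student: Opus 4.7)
The claim splits into three parts of very different difficulty. The bound on priorities is immediate from the definition: besides the value $1$, the transition priorities take the odd values $2(q-r)+3$ with $r\in\{1,\dots,q\}$ (ranging over $\{3,5,\dots,2q+1\}$) and the even values $2(q-a)+2$ with $a\in\{1,\dots,q\}$ (ranging over $\{2,4,\dots,2q\}$), giving at most $2q+1\leq 2n+1$ distinct priorities. For the size bound, $|W|=2^{|\overline{Q}|}\cdot|\mathsf{pperm}(Q)|$, and $|\mathsf{pperm}(Q)|=\sum_{k=0}^{q}q!/(q-k)!=q!\sum_{j=0}^{q}1/j!\leq q!\,e$; writing $p=n-q$ it then remains to show $2^{p}q!\leq n!$ for $n\geq 4$, which is an elementary case split: for $q\geq 1$ the ratio $n!/q!=n(n-1)\cdots(q+1)$ is a product of $p$ integer factors each $\geq 2$, and for $q=0$ it reduces to $2^n\leq n!$, which holds from $n=4$ onwards.

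The core of the proof is $L(\mathcal{A})=L(\mathcal{B})$. My first step is to establish an invariant on the state $(U_j,l_j)$ reached by $\mathcal{B}$ after reading $w(0)\cdots w(j-1)$: namely $U_j=\delta(u_0,w(0)\cdots w(j-1))\cap\overline{Q}$, and the entries of $l_j$ are exactly the age-ordered survivors of $\delta(u_0,w(0)\cdots w(j-1))\cap Q$ under the merges imposed by Step~2. By Lemma~\ref{lem:evdet} and the ensuing remark, limit-determinism of the input BA makes $Q=\mathsf{reach}(\pi_3[F])$ deterministic; this both legitimises Step~1 (each $l(i)$ has a unique successor in $Q$ or none at all) and implies that the sequence of states taken by a single slot of~$l$ over a period in which that slot is not affected by Steps~2 or~3 genuinely traces a deterministic run-segment of $\mathcal{A}$ inside~$Q$. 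The invariant further entails that each entry of $l_j$ is reachable from $u_0$ by a finite run on the corresponding prefix of $w$, a fact needed to extend local run-segments to runs of $\mathcal{A}$ on all of $w$.

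For $L(\mathcal{B})\subseteq L(\mathcal{A})$, suppose $\mathcal{B}$'s run on $w$ has maximum infinite priority $2(q-k)+2$. Priorities strictly greater than this occur only finitely often; in particular $r\leq k$ would yield an odd priority $\geq 2(q-k)+3$, so eventually $r>k$, meaning positions $1,\dots,k$ are never removed thereafter. Determinism of $Q$ then turns the entries at position $k$ from that point on into an infinite deterministic run of $\mathcal{A}$ inside $Q$, while recurrence of the even priority forces $F$-transitions at position $k$ infinitely often; prepending the prefix supplied by the invariant gives an accepting run of $\mathcal{A}$. For the converse, let $\rho$ be an accepting run of $\mathcal{A}$: it uses $F$ infinitely often, so enters $Q$, and since $Q$ is $\delta$-closed stays there, becoming deterministic by limit-determinism. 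I would follow $\rho$ as a \emph{track} in $l$: upon entering $Q$ it is inserted at some slot, and thereafter either drops to older slots by Step~3 or is absorbed by an older track through Step~2; by an extremal argument on the finitely many potential absorbing ancestors, some ancestor carries $\rho$ forever and stabilises at a slot $i^*$. This slot is never removed after some point (so $r>i^*$) and is active infinitely often, giving even priorities $\geq 2(q-i^*)+2$ recurrently, while concurrent odd priorities $2(q-r)+3$ with $r>i^*$ are bounded by $2(q-i^*)+1$, so the maximum infinite priority is even. The main obstacle I anticipate is exactly this track-following step: Step~3 renders slot indices mutable, so the identity of the run being tracked must be reasoned about separately from its current index, and its descent through Step~2 merges needs a careful bookkeeping invariant; once that is in place, the $r$--$a$ interplay in the priority function forces the required parity almost mechanically.
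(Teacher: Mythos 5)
Your proposal is correct and follows essentially the same route as the paper's proof: the same counting arguments for $|W|\leq 2^{n-q}q!e\leq n!e$ and for the $2q+1\leq 2n+1$ priorities, the same reachability invariant for the components $(U_j,l_j)$, and for language equality the same two arguments — for $L(\mathcal{B})\subseteq L(\mathcal{A})$ reading off a deterministic $F$-recurring run from the eventually stationary slot $k$ with $2(q-k)+2$ the maximal recurring priority, and for $L(\mathcal{A})\subseteq L(\mathcal{B})$ tracking the accepting run inside $Q$ through the permutation, whose index can only move left (via Step~2 merges and Step~3 shifts) and hence stabilises, forcing the even priority to dominate. The "extremal argument on absorbing ancestors" you anticipate as the main obstacle is precisely the paper's "position moves left only finitely often" bookkeeping, so no genuinely different ideas are involved.
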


\begin{corollary}\label{cor:edba}
Limit-deterministic B\"uchi automata of size $n$ 
can be determinized to deterministic parity automata of size 
$\lO(n!)$ and with $\lO(n)$ priorities.
\end{corollary}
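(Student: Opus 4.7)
The plan is to apply the construction and invoke Theorem~\ref{thm:edba} directly. Concretely, given a limit-deterministic BA $\mathcal{A}=(V,\Sigma,\delta,u_0,F)$ with $|V|=n$, I form the DPA $\mathcal{B}=(W,\Sigma,\delta',w_0,\alpha)$ as prescribed by Definition~\ref{defn:permdet}. Theorem~\ref{thm:edba} already supplies the three ingredients the corollary asserts: language equivalence $L(\mathcal{A})=L(\mathcal{B})$, a priority bound of $2n+1$, and, for $n\geq 4$, a state bound $|W|\leq n!\cdot e$. Since $e$ is an absolute constant, both $2n+1 \in \mathcal{O}(n)$ and $n!\cdot e \in \mathcal{O}(n!)$ follow immediately, so $\mathcal{B}$ witnesses the claim whenever $n\geq 4$.

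For the finitely many cases $n<4$, there is nothing substantive to prove: any language of a BA with fewer than four states is recognized by a DPA of absolutely bounded size, which is swallowed by the $\mathcal{O}(\cdot)$ notation. The main obstacle, therefore, does not lie in the corollary itself but in the theorem it invokes; Theorem~\ref{thm:edba} bundles together the nontrivial combinatorial bound on $|\mathcal{P}(\overline{Q})\times\mathsf{pperm}(Q)|$ (which is where the factorial arises, due to partial permutations replacing the tree structure of Safra trees) with the correctness argument for the priority assignment based on the indices $r$ and $a$. Taking that theorem as given, the corollary is a one-step consequence.
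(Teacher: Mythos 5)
Your proposal is correct and matches the paper's treatment: the corollary is stated as an immediate consequence of Theorem~\ref{thm:edba}, whose bounds ($L(\mathcal{A})=L(\mathcal{B})$, at most $2n+1$ priorities, $|W|\leq n!\,e$ for $n\geq 4$) directly give the $\mathcal{O}(n!)$ and $\mathcal{O}(n)$ claims, with the finitely many small cases absorbed by the asymptotic notation.
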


\begin{example} Consider the limit-deterministic BA $\mathcal{A}$
  depicted below and the determinized DPA $\mathcal{B}$ that is
  constructed from it by applying the method.  We see by
  Lemma~\ref{lem:evdet} that $\mathcal{A}$ is really
  limit-deterministic: we have $F=\{(1,b,3)\}$, i.e. the
  $b$-transition from state $1$ to state $3$ (depicted with a boxed
  transition label) is the only accepting transition; thus we have
  $Q=\mathsf{reach}(\pi_3[F])=\{1,3\}$ (so $\ov{Q}=\{0,2\}$), and the states
  $1$ and $3$ are deterministic.
  Moreover, $L(\mathcal{A})=L(\mathcal{B})=a(a|b)^+(a^+ b)^\omega$.\\

\begin{minipage}{.3\linewidth}
$\quad$
		 $\mathcal{A}$:\\
\tikzset{every state/.style={minimum size=15pt}}
\begin{tiny}
  \begin{tikzpicture}[
		auto,
    node distance=0.8cm,
    semithick
    ]
     \node[state,initial above] (0) {$0$};
     \node (yo) [below of=0] {};
     \node[state] (1) [left of=yo] {$1$};
     \node[state] (2) [right of=yo] {$2$};
     \node[state] (3) [below of=yo] {$3$};
     \path[->] (0) edge [loop right] node [right] {$a$} (0);
     \path[->] (0) edge node [pos=0.3,left] {$a$} (1);
     \path[->] (0) edge node [pos=0.3,right] {$a$} (2);
     \path[->] (1) edge [loop left] node [left] {$a$} (1);
     \path[->] (1) edge [bend right=30] node [pos=0.3,below] {$\fbox{b}$\;\;\;\;\;} (3);
     \path[->] (2) edge [loop right] node [right] {$a,b$} (2);
     \path[->] (2) edge node [pos=0.6,right] {$a,b$} (3);
     \path[->] (3) edge [bend right=30] node [pos=0.6,right] {$a$} (1);
     
  \end{tikzpicture}
\end{tiny}

    \end{minipage}%
		$\quad$
		$\quad$
		$\quad$
    \begin{minipage}{.7\linewidth}
$\mathcal{B}$:
\tikzset{every state/.style={minimum size=15pt}}
\begin{tiny}

  \begin{tikzpicture}[
		auto,
    node distance=1.0cm,
    semithick
    ]
     \node[rounded corners, draw,initial above] (1) {$\{0\},[\,]$};
		 \node (yo) [right of=1] {};
     \node[rounded corners, draw] (2) [below of=1] {$\emptyset,[\,]$};
     \node[rounded corners, draw] (3) [right of=yo] {$\{0,2\},[1]$};
     \node[rounded corners, draw] (4) [below of=3] {$\{0,2\},[1,3]$};
		 \node (yo2) [right of=3] {};
     \node[rounded corners, draw] (5) [right of=yo2] {$\{2\},[3]$};
     \node[rounded corners, draw] (6) [below of=5] {$\{2\},[1,3]$};
		 \path[->] (1) edge node [left] {$b,1$} (2);
     \path[->] (1) edge node [above] {$a,1$} (3);
     \path[->] (2) edge [loop below] node [below] {$a,b,1$} (2);
     \path[->] (3) edge node [left] {$a,1$} (4);
     \path[->] (3) edge node {$b,4$} (5);
     \path[->] (4) edge [loop below] node [below] {$a,3$} (4);
     \path[->] (4) edge node [pos=0.4,above] {$b,4\,$} (5);
     \path[->] (5) edge [loop above] node [above] {$b,5$} (5);
     \path[->] (5) edge [bend right=20] node [pos=0.6,left] {$a,1$} (6);
     \path[->] (6) edge [loop below] node [below] {$a,3$} (6);
     \path[->] (6) edge [bend right=20] node [pos=0.5, right] {$b,4$} (5);
     \end{tikzpicture}
\end{tiny}
\vspace{10pt}
\end{minipage}
Notice that in $\mathcal{B}$, 
there is a $b$-transition with priority 1 from the initial state to the sink state $(\emptyset,[\,])$
and an $a$-transition to $(\{0,2\},[1])$; as $1\in Q$ but $(0,a,1)\notin F$, this transition has priority 1.
A further $b$-transition leads from $1$ to $3$ in $\mathcal{A}$; in $\mathcal{B}$, we have a $b$-transition
from $(\{0,2\},[1])$ to $(\{2\},[3])$ and since $(1,b,3)\in F$, the first position in the permutation component is active
during this transition so that the transition has priority 4.
Yet another $b$-transition loops from $(\{2\},[3])$ to $(\{2\},[3])$. Since there is no $b$-transition
starting at state $3$, the first element in the permutation is removed in Step~1.\ of the construction. Since there is a
$b$-transition from $2$ to $3$, it is added to the permutation again in Step~4.\ of the construction. Crucially, however, the priority of the
transition is $5$, since the first item of the permutation has been (temporarily) removed. 
The intuition is that the trace of $3$ ends when the letter $b$ is read;
even though a new trace of $3$ immediately starts, we do not consider it to be the same trace as the previous
one. Thus the transition obtains priority $5$ so that it may be used only finitely often in an accepting
run of $\mathcal{B}$, i.e. accepting runs contain an uninterrupted trace that visits state $3$ 
infinitely often. Thus two or more consecutive $b$'s can only occur finitely often in any accepted word.

\end{example}

\subsection{Determinizing Limit-Deterministic Parity Automata}

To determinize limit-deterministic PA, it suffices to transform them to equivalent
limit-deterministic BA and determinize the BA.
This transformation from PA to BA is achieved by a construction which is 
inspired by Theorems 2 and 3 in~\cite{KKV01}; we add the observation that the construction
preserves limit-determinism.

\begin{definition}\label{defn:edpatoedba}
Given a limit-deterministic PA $\mathcal{C}=(V,\Sigma,\delta,u_0,\alpha)$ with $n=|V|$
and $k>2$ priorities,
we define the limit-deterministic BA $\mathcal{D}=(W,\Sigma,\delta',u_0,F)$ by putting
$W=V\cup (V\times \{0,\ldots, \big\lceil{\frac{k-1}{2}}\big\rceil\})$,  
and for $w\in W$ and $a\in\Sigma$,
\begin{align*}
\delta'(v,a)&=\begin{cases}\{(w,m)\mid (v,a,w)\in\alpha(2m)\}\cup \delta(v,a) & \text{if }v\in V\\
														   \{(w,l)\mid (v',a,w)\in\alpha_\leq(2l)\} & \text{if }v=(v',l)\notin V
                   \end{cases}
\end{align*}
Finally, we put $F=\{((v,l),a,(w,l))\in \delta'\mid \alpha(v,a,w)=2l\}$.
To see that~$\mathcal{D}$ is limit-deterministic, it suffices by Lemma~\ref{lem:evdet} to
show that $\mathsf{reach}(\pi_3[F])$ is deterministic. We observe that for each state $(w,l)\in \mathsf{reach}(\pi_3[F])$,
$(w,l)$ is deterministic by definition of $\delta'$ since $w$ is contained in a (by Lemma~\ref{lem:evdet}, internally deterministic)
$2l$-compartment of~$\mathcal{C}$. 
\end{definition}


\begin{lemma}\label{lem:npanba}
We have $L(\mathcal{C})=L(\mathcal{D})$ and $|W|\leq n(\big\lceil{\frac{k}{2}}\big\rceil+1)\leq nk$.
\end{lemma}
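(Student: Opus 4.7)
The plan is to first settle the cardinality bound by direct counting, and then prove the two language inclusions. The key observation is that an accepting BA-run of $\mathcal{D}$ must consist of a nondeterministic layer-$1$ prefix, followed by a single priority-$2m$ ``guess'' transition into layer~$2$ at level~$m$, after which the run is forced to remain inside layer~$2$ at level~$m$; this structure mirrors the tail behaviour of an accepting PA-run of $\mathcal{C}$ whose maximum infinitely-recurring priority is~$2m$.

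For the size bound, I unfold $|W|=n+n(\lceil(k-1)/2\rceil+1)$ and perform a short case split on the parity of~$k$, yielding $|W|\leq n(\lceil k/2\rceil+1)\leq nk$ for $k\geq 3$; this is routine arithmetic.

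For $L(\mathcal{C})\subseteq L(\mathcal{D})$, I take any accepting PA-run $\rho$ of $\mathcal{C}$ on $w$, set $2m:=\max(\mathsf{Inf}(\alpha\circ\mathsf{trans}(\rho)))$ (even, since $\rho$ is accepting), choose $i_0$ such that no priority strictly above $2m$ occurs in $\mathsf{trans}(\rho)$ from $i_0$ onwards, and pick the least $j\geq i_0$ with $\mathsf{trans}(\rho)(j)\in\alpha(2m)$; such $j$ exists by the infinite recurrence of priority~$2m$. I then define $\rho'(i):=\rho(i)$ for $i\leq j$ and $\rho'(i):=(\rho(i),m)$ for $i>j$, and check from the definition of $\delta'$ that (a) the layer-$1$ prefix is a valid $\mathcal{D}$-run since $\delta(v,a)\subseteq\delta'(v,a)$ for $v\in V$; (b) the jump at index~$j$ is legal because $\mathsf{trans}(\rho)(j)\in\alpha(2m)$; and (c) all subsequent transitions are legal because $\mathsf{trans}(\rho)(i)\in\alpha_\leq(2m)$ for $i>j$. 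Infinitely many tail transitions of $\rho'$ have priority exactly~$2m$ and hence lie in $F$, so $\rho'$ is accepting.

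For $L(\mathcal{D})\subseteq L(\mathcal{C})$, I first observe that since $F$ only contains transitions between layer-$2$ states of the same level and $\delta'$ admits no transitions from layer~$2$ back to layer~$1$ or across levels, every accepting BA-run $\rho'$ must enter layer~$2$ at some index~$j$ and some level~$m$ and remain at level~$m$ forever after. Projecting $\rho'$ componentwise to $V$ (identifying $(v,l)$ with $v$) gives a sequence $\rho$; a short case distinction on each transition's source layer shows that $(\rho(i),w(i),\rho(i+1))\in\delta$ for every $i$, so $\rho$ is a valid $\mathcal{C}$-run on $w$. From index~$j-1$ onwards all its transitions have priority $\leq 2m$ (the jump transition contributes priority exactly~$2m$, subsequent layer-$2$ transitions are constrained to $\alpha_\leq(2m)$), and the $F$-transitions of $\rho'$ supply infinitely many priority-$2m$ transitions in $\rho$; hence $\max(\mathsf{Inf}(\alpha\circ\mathsf{trans}(\rho)))=2m$ is even and $\rho$ is accepting. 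The only real design choice is the selection of the jump index~$j$ in the forward direction; the rest is careful unfolding of the definition of $\delta'$.
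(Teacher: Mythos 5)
Your proposal is correct and takes essentially the same route as the paper's own proof: for $L(\mathcal{C})\subseteq L(\mathcal{D})$ you keep the run in the first copy of $V$, jump into the level-$m$ copy at a transition of the maximal infinitely recurring (even) priority $2m$ after all higher priorities have ceased, and harvest infinitely many $F$-transitions there, while for the converse you use that an accepting run of $\mathcal{D}$ must eventually stay in a fixed level-$m$ copy and project it back to an accepting run of $\mathcal{C}$ whose maximal infinitely recurring priority is $2m$. One small caveat on the counting step: the literal size of $W$ is $n\bigl(\lceil\frac{k-1}{2}\rceil+2\bigr)$, which for even $k$ exceeds the intermediate bound $n\bigl(\lceil\frac{k}{2}\rceil+1\bigr)$ by $n$ (a slack already present in the lemma's statement, which the paper likewise dismisses as trivial), though the final bound $|W|\leq nk$ still holds for all $k>2$.
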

By Theorem~\ref{thm:edba}, $\mathcal{D}$ can be determinized to a DPA $\mathcal{E}$
of size at most $(nk)!e$, with at most $nk+2$ priorities
and with $L(\mathcal{D})=L(\mathcal{E})$.

\begin{corollary}
Limit-deterministic parity automata of size $n$ with $k$ priorities
can be determinized to deterministic parity automata of size 
$\lO((nk)!)$ and with $\lO(nk)$ priorities.
\end{corollary}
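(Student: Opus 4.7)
The plan is to obtain the stated bound by simply chaining the two constructions that have just been developed, so this corollary is purely a matter of composition and arithmetic; the conceptual work has already been done in Lemma~\ref{lem:npanba} and Theorem~\ref{thm:edba}.

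Concretely, given a limit-deterministic parity automaton $\mathcal{C}$ of size $n$ with $k$ priorities, I would first invoke Definition~\ref{defn:edpatoedba} to produce the limit-deterministic B\"uchi automaton $\mathcal{D}$ with $L(\mathcal{D})=L(\mathcal{C})$ and, by Lemma~\ref{lem:npanba}, at most $nk$ states. (Recall that Lemma~\ref{lem:npanba} also guarantees that $\mathcal{D}$ is limit-deterministic, since the reachable successors of accepting transitions sit inside a single $2l$-compartment of $\mathcal{C}$, which by Lemma~\ref{lem:evdet} is internally deterministic.) Thus the hypotheses of Theorem~\ref{thm:edba} are met.

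Second, I would apply the permutation-based determinization of Definition~\ref{defn:permdet} to $\mathcal{D}$, obtaining a deterministic parity automaton $\mathcal{E}$ with $L(\mathcal{E})=L(\mathcal{D})=L(\mathcal{C})$. By Theorem~\ref{thm:edba}, $\mathcal{E}$ has at most $(nk)!\,e$ states and at most $2(nk)+1$ priorities. In asymptotic notation this yields $|\mathcal{E}|\in\mathcal{O}((nk)!)$ with $\mathcal{O}(nk)$ priorities, which is exactly the claimed bound.

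Since the corollary is a direct composition of two results whose constructions and bounds are already established in the excerpt, there is no real obstacle in the argument; one only needs to check that limit-determinism is preserved by the PA-to-BA translation (which is observed in Definition~\ref{defn:edpatoedba}) so that Theorem~\ref{thm:edba} is applicable to $\mathcal{D}$. The proof can accordingly be kept very short, essentially a one-line composition accompanied by the size and priority count.
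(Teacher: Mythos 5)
Your proposal is correct and follows exactly the paper's route: translate the limit-deterministic PA to a limit-deterministic BA via Definition~\ref{defn:edpatoedba} (size at most $nk$ by Lemma~\ref{lem:npanba}), then apply Theorem~\ref{thm:edba} to obtain a DPA of size at most $(nk)!\,e$ with $\mathcal{O}(nk)$ priorities. The only cosmetic difference is the priority count (you quote $2(nk)+1$ from the coarse statement of Theorem~\ref{thm:edba}, while the paper uses the slightly sharper bound $nk+2$ coming from the size of the deterministic part), which does not affect the asymptotic claim.
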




\section{Permutation Games for the
aconjunctive $\mu$-Calculus}\label{section:games}

\subsection{The $\mu$-Calculus}
\Ni We briefly recall the definition of the 
$\mu$-calculus. We fix a set $P$ of \emph{propositions}, a set $A$ of
\emph{actions}, and a set $\mathfrak{V}$ of fixpoint
variables. The set $\mathsf{L}_\mu$ of $\mu$-calculus formulas is the
set of all formulas $\phi,\psi$ that can be constructed by the grammar
\begin{align*}
\psi,\phi ::= \bot \mid \top \mid p \mid \neg p \mid X \mid \psi\wedge\phi \mid \psi\vee\phi
\mid \langle a \rangle\psi \mid [a]\psi \mid \mu X.\, \psi \mid \nu X.\, \psi
\end{align*}
where $p\in P$, $a\in A$, and $X\in \mathfrak{V}$; we write $|\psi|$
for the size of a formula $\psi$. Throughout the paper, we use $\eta$
to denote one of the fixpoint operators $\mu$ or $\nu$. We refer to
formulas of the form $\eta X.\,\psi$ as \emph{fixpoint literals}, to
formulas of the form $\langle a\rangle\psi$ or $[a]\psi$ as
\emph{modal literals}, and to $p$, $\neg p$ as \emph{propositional
  literals}. The operators $\mu$ and $\nu$ \emph{bind} their
variables, inducing a standard notion of \emph{free variables} in
formulas. We denote the set of free variables
of a formula $\psi$ by $\mathsf{FV}(\psi)$. A formula $\psi$ is
\emph{closed} if $\mathsf{FV}(\psi)=\emptyset$, and \emph{open}
otherwise. We write $\psi\leq\phi$ ($\psi<\phi$) to indicate that
$\psi$ is a (proper) subformula of $\phi$. We say that $\phi$
\emph{occurs free} in $\psi$ if $\phi$ occurs in $\psi$ as a
subformula that is not in the scope of any fixpoint operator.
Throughout, we \emph{restrict to formulas that are guarded}, i.e.\
have at least one modal operator between any occurrence of a variable
$X$ and an enclosing binder $\eta X$.  (This is standard although
possibly not without loss of generality~\cite{FriedmannLange13a}.)
Moreover we assume w.l.o.g.\ that input formulas are \emph{clean},
i.e.\ all fixpoint variables are mutually distinct and distinct from all free variables, and \emph{irredundant},
i.e.\ $X\in \mathsf{FV}(\psi)$ for all subformulas $\eta X.\,\psi$.
We refer to a variable~$X$ that is bound by a least (greatest)
fixpoint operator $\mu X.\chi$ ($\nu X.\chi$) in a formula~$\phi$ as a \emph{$\mu$-variable} (\emph{$\nu$-variable}) of~$\phi$, and to the
process of substituting such an~$X$ with its binding
fixpoint literal ($\mu X.\chi$ or $\nu X.\chi$, respectively) as \emph{unfolding}. An occurrence of a subformula $\psi$ of a formula
$\phi$ \emph{contains an active $\mu$-variable}~\cite{Kozen83} if~$\psi$ can be converted into a formula containing a free occurrence of
a $\mu$-variable of~$\phi$ by repeatedly unfolding $\nu$-variables of~$\phi$.  

Formulas are evaluated over \emph{Kripke structures}
$\mathcal{K}=(W,(R_a)_{a\in A},\pi)$, consisting of a set $W$ of
\emph{states}, a family $(R_a)_{a\in A}$ of relations
$R_a\subseteq W\times W$, and a valuation $\pi:P\to\Pow(W)$ of the
propositions. Given an \emph{interpretation}
$i:\mathfrak{V}\to \Pow(W)$ of the fixpoint variables, define
$\Sem{\psi}_i\subseteq W$ by the obvious clauses for Boolean operators
and propositions,
$\sem{X}_i=i(X)$, 
$\sem{\langle a\rangle\psi}_i=\{v\in W\mid \exists w\in R_a(v).w\in
\sem{\psi}_i\}$,
$\sem{[a]\psi}_i=\{v\in W\mid \forall w\in R_a(v).w\in\sem{\psi}_i\}$,
$\sem{\mu X.\,\psi}_i =\mu\sem{\psi}^X_i$ and
$\sem{\nu X.\,\psi}_i =\nu\sem{\psi}^X_i$, where
$R_a(v)=\{w\in W\mid (v,w)\in R_a\}$,
$\sem{\psi}^X_i(G) = \sem{\psi}_{i[X\mapsto G]}$, and $\mu$, $\nu$
take least and greatest fixpoints of monotone functions, respectively.
If $\psi$ is closed, then $\sem{\psi}_i$ does not depend on $i$, so we
just write $\sem{\psi}$. We denote the \emph{Fischer-Ladner closure}~\cite{Kozen88} of a formula $\phi$
by $\mathbf{F}(\phi)$, or just by $\mathbf{F}$, if no confusion arises; intuitively, $\mathbf{F}$
is the set of formulas that
can arise as subformulas when unfolding each fixpoint operator in $\phi$ at most once. We note 
$\mathbf{F}\leq|\phi|$~\cite{Kozen88}.

The \emph{aconjunctive fragment}~\cite{Kozen83} of the $\mu$-calculus
is obtained by requiring that for all conjunctions that occur as a
subformula, at most one of the conjuncts contains an active
$\mu$-variable.  In the \emph{weakly aconjunctive
  fragment}~\cite{Walukiewicz00}, this requirement is loosened to the
constraint that all conjunctions that occur as a subformula and
contain an active $\mu$-variable are of the shape
$\psi\wedge \Diamond \psi_1\wedge\ldots
\wedge\Diamond\psi_n\wedge\Box(\psi_1\vee\ldots\vee\psi_n)$,
where $\psi$ does not contain active $\mu$-variables. For instance,
for all $n$, the formula
$\eta X_n\ldots\mu X_1. \nu X_0. \bigvee_{0\leq i\leq n}
(q_i\wedge\Diamond X_i)$
is aconjunctive (and equivalent to the weakly aconjunctive formula
obtained by replacing $\Diamond X_i$ with
$\Diamond X_i\land\Diamond\top\land\Box(X_i\lor\top)$). The permutation
satisfiability games that we introduce work for the more expressive
weakly aconjunctive fragment.

We will make use of the standard \emph{tableau rules}~\cite{FriedmannLange13a} (each consisting of one
\emph{premise} and a possibly empty set of \emph{conclusions}):
		\begin{align*}
  (\bot)\quad & \;\;\;\;\quad\quad\infrule{\Gamma,\bot}
	{}
&  (\lightning)\quad & \quad\quad\infrule{\Gamma,p,\neg p}
	{}
&
  (\wedge)\quad & \;\quad\quad\infrule{\Gamma,\psi\wedge \phi}
	{\Gamma,\psi,\phi}
 \\[5pt] (\vee) \quad & \quad\infrule{\Gamma,\psi\vee \phi}
	{{\Gamma,\psi}\qquad {\Gamma,\phi}}
&
  (\langle a\rangle) \quad & \infrule{\Gamma,[a] \psi_1,\ldots,[a]\psi_n,\langle a\rangle \phi}
	{\psi_1,\ldots,\psi_n,\phi}
  &(\eta) \quad &\infrule{\Gamma,\eta X.\, \psi}
	{\Gamma,\psi [X\mapsto \eta X.\, \psi]}
\end{align*}
(for $a\in A$, $p\in P$); we refer to the tableau rules by $\mathcal{R}$ and usually write rule applications
with premise $\Gamma$ and conclusion $\Sigma=\Gamma_1,\ldots,\Gamma_n$ sequentially: $(\Gamma/\Sigma)$.

To track fixpoint formulas through pre-tableaux,
we will use deferrals, that is, the decomposed form of
formulas that are obtained by unfolding fixpoint
literals.
\begin{definition}[Deferrals]\label{defn:affil}
  Given fixpoint literals $\chi_i = \eta X_i.\,\psi_i$, $i=1,\dots,n$,
  we say that a substitution
  $\sigma=[X_1\mapsto \chi_1];\ldots;[X_n\mapsto\chi_n]$ \emph{sequentially
    unfolds $\chi_n$} if $\chi_i <_f \chi_{i+1}$ for all $1\leq i<n$,
  where we write $\psi <_f \eta X.\,\phi$ if $\psi\leq\phi$ and $\psi$
  is open and occurs free in $\phi$ (i.e.\ $\sigma$ unfolds a nested
  sequence of fixpoints in $\chi_n$ innermost-first).  We say that a
  formula $\chi$ is \emph{irreducible} if for every substitution
  $[X_1\mapsto \chi_1];\ldots;[X_n\mapsto \chi_n]$ that sequentially unfolds
  $\chi_n$, we have that
  $\chi = \chi_1([X_2\mapsto \chi_2];\ldots;[X_n\mapsto \chi_n])$
  implies $n=1$ (i.e.\ $\chi=\chi_1$). A formula $\psi$
  \emph{belongs} to an irreducible closed fixpoint literal $\theta_n$, or is a
  \emph{$\theta_n$-deferral}, if $\psi=\alpha\sigma$ for some
  substitution
  $\sigma = [X_1\mapsto \theta_1];\ldots;[X_n\mapsto \theta_n]$ that
  sequentially unfolds $\theta_n$ and some $\alpha <_f \theta_1$.  We denote
  the set of $\theta_n$-deferrals by $\defer(\theta_n)$.
\end{definition}
\Ni E.g.\ the substitution
$\sigma=[Y\mapsto \mu Y.\,(\Box X\land\Diamond\Diamond
Y)];[X\mapsto\theta]$ 
sequentially unfolds the irreducible closed formula
$\theta=\nu X.\,\mu Y.\,(\Box X\land\Diamond\Diamond Y)$, and
$(\Diamond Y)\sigma=\Diamond\mu Y.\,(\Box\theta\land\Diamond\Diamond
Y)$
is a $\theta$-deferral. A fixpoint literal is irreducible if it is not
an unfolding $\psi[X\mapsto\eta X.\,\psi]$ of a fixpoint literal
$\eta X.\,\psi$; in particular, every clean irredundant fixpoint
literal is irreducible. 


As a technical tool, we define a measure for the depth 
of alternation at which a deferral resides inside the fixpoint to which it belongs:

\begin{definition}[Alternation level and alternation depth]
The \emph{alternation level} $\al(\phi\sigma):=\al(\sigma)$ of a deferral $\phi\sigma$
is defined inductively over $|\sigma|$, where $\al(\epsilon)=\al(\epsilon)_\mu=\al(\epsilon)_\nu=0$, for the
empty substitution $\epsilon$,
$\al(\sigma;[X\mapsto \eta X.\,\psi])=\al(\sigma)_\mu+1$ if $\eta=\mu$ and $\al(\sigma;[X\mapsto \eta X.\,\psi])=\al(\sigma)_\nu$ otherwise, and
\begin{small}
\begin{align*}
\al(\sigma;[X\mapsto \eta X.\,\psi])_\mu=\begin{cases}\al(\sigma)_\mu&\text{if }\eta=\mu\\\al(\sigma)_\nu+1&\text{otherwise}\end{cases} \\
\al(\sigma;[X\mapsto \eta X.\,\psi])_\nu=\begin{cases}\al(\sigma)_\nu&\text{if }\eta=\nu\\\al(\sigma)_\mu+1&\text{otherwise}\end{cases}
\end{align*}
\end{small}%
This definition assigns greater numbers to inner fixpoint literals, i.e. to
deferrals which occur at higher nesting depth, i.e. with more alternation inside their sequence $\sigma$.
Given a formula $\psi$, its \emph{alternation depth} $\ad(\phi)$ is defined as
$\ad(\phi)= \max\{\al(\delta)\mid \delta\in\FLphi,\exists\theta.\delta\in\defer(\theta)\}$.

\end{definition}

\subsection{Limit-Deterministic Tracking Automata}

As a first step towards deciding the satisfiability of
a weakly aconjunctive $\mu$-calculus formula $\phi$,
we now construct a tracking automaton that takes 
branches of (that is, infinite paths
through) standard pre-tableaux for $\phi$ as input
and accepts a branch if and only if it contains a 
least fixpoint formula whose satisfaction
is deferred indefinitely on that branch.
To this end, we import the following notions
of threads and tableaux 
from~\cite{FriedmannLange13a}:

\begin{definition}
A \emph{pre-tableau} for a formula $\phi$ is a graph the nodes of which are labelled with subsets of the Fischer-Ladner closure $\FLphi$; the graph structure $L$ of a pre-tableau is constructed by applying tableau rules from $\mathcal{R}$ to the labels of nodes with the requirement that for
each rule application $(\Gamma/\Sigma)$ to the label 
$\Gamma$ of a node $v$, there is a $w$ with 
$(v,w)\in L$ such that the label of $w$ is contained in~$\Sigma$.
Nodes whose labels are \emph{saturated} (i.e.\ do not contain propositional or fixpoint operators)
are called \emph{states}.
Formulas are tracked through rule applications by the \emph{connectedness relation} $\leadsto \subseteq (\Pow(\FLphi)\times \FLphi)^2$ that is defined by putting $\Phi,\phi\leadsto\Psi,\psi$ if and only if $\Psi$ is a conclusion of an application of
a rule from $\mathcal{R}$ to $\Phi$ such that $\phi\in\Phi$, $\psi\in\Psi$, and the rule application transforms $\phi$ to $\psi$;
if the rule application does not change $\phi$, then $\phi=\psi$. E.g. we have $\Phi,\psi_1\wedge\psi_2\leadsto\Psi,\psi_i$,
where $i\in\{1,2\}$ and~$\Psi$ is obtained from~$\Phi$ by applying the rule $(\wedge)$ to $\psi_1\wedge\psi_2$.
A \emph{branch} $\Psi_0,\Psi_1\ldots$ in a pre-tableau is a sequence of labels such that for all $i>0$, $\Psi_{i+1}$
is an $L$-successor of $\Psi_i$. A \emph{thread} on an infinite branch 
$\Psi_0,\Psi_1,\ldots$ is an infinite sequence $t=\psi_0,\psi_1\ldots$ of formulas  with 
$\Psi_0,\psi_0\leadsto\Psi_1,\psi_1\leadsto\ldots$.
A \emph{$\mu$-thread} is a thread $t$ such that $\min(\mathsf{Inf}(\al \circ t))$ is odd,
i.e. the outermost fixpoint literal that is unfolded infinitely often in $t$ is a least
fixpoint literal.
A \emph{bad branch} is an infinite branch that contains a $\mu$-thread.
A \emph{tableau} for $\phi$ is a pre-tableau for $\phi$ that does not contain bad branches.
\end{definition}

\Ni We import from~\cite{FriedmannLange13a} the well-known fact that the existence of tableaux in the sense defined above
characterizes satisfiability. In~\cite{FriedmannLange13a}, the
result is shown for the more general \emph{unguarded} $\mu$-calculus; we note
that the restriction to guarded formulas does not invalidate the theorem.
\begin{theorem}[\cite{FriedmannLange13a}]\label{thm:sattab}
A $\mu$-calculus formula $\psi$ is satisfiable if and only if there is a tableau for $\psi$.
\end{theorem}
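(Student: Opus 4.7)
The plan is to prove both directions separately, following the well-known pattern for tableau completeness/soundness for the $\mu$-calculus. Since the theorem is imported from~\cite{FriedmannLange13a}, I only sketch the structure.

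\textbf{Completeness (``only if'').} Assume $\psi$ is satisfied at a state $w_0$ of some Kripke structure $\mathcal{K}$. I would build a pre-tableau in which every node label $\Gamma$ comes equipped with a state $w_\Gamma$ of $\mathcal{K}$ such that $w_\Gamma\in\sem{\bigwedge\Gamma}$. At propositional and fixpoint nodes, the rules $(\wedge),(\vee),(\eta)$ preserve satisfiability in the obvious way (for $(\vee)$, pick the satisfied disjunct; we only need \emph{some} conclusion satisfiable). At a state node with a diamond formula $\langle a\rangle\phi$, use the semantics to pick an $R_a$-successor $w'$ of $w_\Gamma$ witnessing $\phi$, and let $w'$ be the state attached to the $(\langle a\rangle)$-conclusion. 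This yields a pre-tableau all of whose labels are satisfiable. It remains to argue that no branch is bad: a bad branch carries a $\mu$-thread $t=\psi_0,\psi_1,\dots$ whose outermost infinitely unfolded fixpoint is a $\mu$. Using the standard signature/ordinal argument, the ordinal assigned to the active $\mu$-variable along $t$ would have to strictly decrease infinitely often in a well-founded order, which is impossible.

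\textbf{Soundness (``if'').} Assume a tableau $T$ for $\psi$ exists. Define a Kripke structure $\mathcal{K}_T$ whose worlds are the states (saturated labels) of $T$, with $R_a$ given by the $(\langle a\rangle)$-rule applications (each diamond in a state yields one $a$-successor in $\mathcal{K}_T$), and $\pi(p)$ the set of states whose label contains $p$. I would then prove, for every state $v$ in $T$ with label $\Gamma$ and every $\phi\in\Gamma$, that $v\in\sem{\phi}$. Propositional and modal cases are routine by induction over formula structure. The crux is fixpoints: one shows by transfinite induction on the unfolding ordinal of $\mu$-formulas, together with a coinductive treatment of $\nu$-formulas, that if some $\mu X.\,\chi$ were not satisfied at $v$, one could build an infinite branch through $T$ whose thread indefinitely defers this least fixpoint, contradicting the absence of bad branches. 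Equivalently, one can phrase this via the fact that the tracking automaton rejects every branch, so a parity-style signature argument assigns well-founded ranks to all $\mu$-deferrals.

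\textbf{Main obstacle.} The combinatorial bookkeeping is routine; the only genuinely delicate point is the soundness argument for least fixpoints, i.e.\ linking the absence of bad branches to the well-foundedness of least-fixpoint approximants. The standard trick is to argue contrapositively: from a label $\Gamma\ni\mu X.\,\chi$ with $v\notin\sem{\mu X.\,\chi}$, use the semantic failure to choose, at each rule application, a conclusion that still contains an unsatisfied $\mu$-deferral belonging to $\mu X.\,\chi$; guardedness ensures a modal rule is reached after finitely many steps so the construction does not stall, producing an infinite bad branch. Finally, I would remark that the restriction to \emph{guarded} formulas does not affect the theorem, as already noted after its statement and justified in~\cite{FriedmannLange13a}.
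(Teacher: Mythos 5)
This theorem is not proved in the paper at all: it is imported verbatim from~\cite{FriedmannLange13a}, with only the remark that the restriction to guarded formulas is harmless, so there is no internal proof to compare your attempt against. Your sketch follows the standard argument that the cited work (and the Niwi\'nski--Walukiewicz tradition it builds on) uses: for completeness, a model-guided construction of a pre-tableau whose labels stay satisfiable, with an ordinal-signature argument ruling out $\mu$-threads; for soundness, extraction of a Kripke structure from the state nodes and a truth lemma whose least-fixpoint case is settled by well-founded signatures versus the absence of bad branches. So in outline you are doing the right thing. One caution on the soundness direction as you phrased it: merely propagating \emph{some} unsatisfied deferral along the branch does not by itself show the branch is bad, since the resulting thread could a priori be a $\nu$-thread (also note that at $(\vee)$-nodes the tableau, not you, fixes the successor --- the point is rather that both disjuncts are unsatisfied, so the tracked formula stays unsatisfied whichever conclusion the tableau chose); to conclude that the outermost infinitely unfolded fixpoint is a $\mu$ one really needs the ordinal/signature bookkeeping (choosing at each step a counterexample of minimal signature, so that $\nu$-unfoldings cannot dominate), which you flag as the delicate point but do not carry out. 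That is exactly the technical core of the cited proof, so for a self-contained argument it would have to be spelled out; as a reconstruction of the imported result, your outline matches the standard approach.
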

 Given a formula $\phi$, we
 define the alphabet $\Sigma_{\phi}$
to consist of letters that each identify a rule $R\in\mathcal{R}$, a principal formula from $\mathbf{F}$ and one of the conclusions of $R$.  
E.g. the letter $((\vee),0,p\vee\Diamond q)$ identifies the application of the disjunction rule
to a principal formula $p\vee\Diamond q$ and the choice
of the left conclusion; thus this letter
identifies the transition from $p\vee\Diamond q$ to $p$ by use of rule $(\vee)$. We note $|\Sigma_{\phi}|\in\mathcal{O}(|\phi|)$.
Further, we denote the set of all words that encode some branch and some bad branch in some pre-tableau for $\phi$ by 
$\mathsf{Branch}(\phi)$ and $\mathsf{BadBranch}(\phi)$, respectively.

As a crucial result, we now show that limit-deterministic
automata are expressive enough to exactly recognize the
bad branches in pre-tableaux for weakly aconjunctive formulas.

\begin{lemma}\label{lem:tracking_automaton}
Let $\phi$ be a weakly aconjunctive formula. Then
there is a \emph{limit-deterministic} PA $\mathcal{A}=(V,\Sigma_{\phi},\delta,\phi,\alpha)$
with $|V|\leq |\phi|$ and $\mathsf{idx}(\mathcal{A})\leq \mathsf{ad}(\phi)+1$ such that
$L(\mathcal{A})\cap\mathsf{Branch}(\phi)=\mathsf{BadBranch}(\phi)$. 
\end{lemma}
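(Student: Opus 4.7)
The plan is to build $\mathcal{A}$ so that it tracks a single formula through the connectedness relation $\leadsto$, so I take $V = \mathbf{F}(\phi)$ (automatically giving $|V|\le |\phi|$) with initial state $\phi$. For each letter $\ell = (R, \phi_0, j) \in \Sigma_\phi$ and each $\psi\in V$, I would put $(\psi, \ell, \psi')\in\delta$ iff the application of $R$ with principal $\phi_0$ and $j$-th conclusion tracks $\psi$ to $\psi'$ via $\leadsto$. Under non-modal rules, $\psi$ passes through ($\psi' = \psi$) whenever $\psi\ne \phi_0$; at $\psi = \phi_0$, the rule's action determines $\psi'$: unique for $(\eta)$, $(\vee)$ and the diamond under $(\langle a\rangle)$, but with both conjuncts as successors under $(\wedge)$ (the only source of nondeterminism). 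Under $(\langle a\rangle)$, a non-principal, non-box formula has no successor, which kills the thread.

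For priorities, I would assign nontrivial values only to $(\eta)$-transitions where the principal $\psi = \eta X.\chi$ is a deferral of some fixpoint literal $\theta$, putting $p(\psi) = f(\al(\psi))$ for an order-reversing affine map $f\colon \{0,\dots,\ad(\phi)\}\to \{0,\dots,\ad(\phi)+1\}$ chosen so that $f(k)$ is even iff the deferral is $\mu$-typed. All other transitions receive a low default priority, keeping $\mathsf{idx}(\mathcal{A})\le\ad(\phi)+1$. A routine induction on the structure of the tableau rules would then show that runs of $\mathcal{A}$ on a word $w\in\mathsf{Branch}(\phi)$ are in bijection with the threads on the encoded branch starting from $\phi$; under the priority convention, the parity acceptance condition precisely recasts the $\mu$-thread condition $\min(\mathsf{Inf}(\al\circ t))$ odd as $\max(\mathsf{Inf}(p\circ\mathsf{trans}(\rho)))$ even, giving $L(\mathcal{A})\cap\mathsf{Branch}(\phi) = \mathsf{BadBranch}(\phi)$.

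The main obstacle is limit-determinism of $\mathcal{A}$; I would reduce this by Lemma~\ref{lem:evdet} to internal determinism of every compartment $C_l$. Since non-$(\wedge)$ transitions are already deterministic, the task is to argue that at $(\wedge)$-steps $\psi = \alpha\wedge\beta$ inside $C_l$, only one of the two successors remains inside $C_l$. Here I would exploit weak aconjunctivity: because $l$ corresponds to unfolding some $\mu$-literal $\theta$ of $\phi$, every state in $C_l$ reached on an accepting run is a $\theta$-deferral, and $\psi$ therefore contains an active $\mu$-variable of~$\phi$; this forces $\psi$, up to associativity of $\wedge$, into the canonical shape $\psi_0\wedge\bigwedge_j\Diamond\psi_j\wedge\Box(\bigvee_j\psi_j)$ with $\psi_0$ free of active $\mu$-variables. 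The conjunct $\psi_0$ can be pushed out of $C_l$ by assigning the maximal priority $\ad(\phi)+1$ to $(\wedge)$-transitions into any formula without active $\mu$-variables, and the remaining $\Diamond$-conjuncts can only survive inside $C_l$ when matched by the principal of the subsequent modal letter (otherwise the thread dies), which collapses their choice against the $\Box$-branch. At each step on the accepting run a single successor thus remains in $C_l$. Conducting this compartment and priority bookkeeping across the binary decompositions of the $n$-ary weakly aconjunctive conjunctions, and verifying that the customized priorities still respect the $\ad(\phi)+1$ bound and the language correctness, is what makes this step delicate; the claimed automaton then arises directly.
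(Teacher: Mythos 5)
Your state space, transition relation, and overall plan (track a single formula via $\leadsto$, price transitions so that parity acceptance recasts the $\mu$-thread condition, and reduce limit-determinism to internal determinism of compartments via Lemma~\ref{lem:evdet} plus weak aconjunctivity) all match the paper. The genuine gap is in the limit-determinism step. You decompose the conjunctions $\psi_0\wedge\Diamond\psi_1\wedge\ldots\wedge\Diamond\psi_n\wedge\Box(\psi_1\vee\ldots\vee\psi_n)$ by ordinary binary $(\wedge)$-steps and argue that the surplus successors are pruned ``by the principal of the subsequent modal letter''. That cannot work: both the semantic definition of limit-determinism and its compartment characterization are per-letter conditions, requiring that at each position of an accepting run the current state have at most one successor on the \emph{current} letter within $\alpha_\leq(l)$ (equivalently, within the compartment). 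At an intermediate state such as $\Diamond\psi_1\wedge(\Diamond\psi_2\wedge\ldots\wedge\Box(\psi_1\vee\ldots\vee\psi_n))$, the single letter encoding the $(\wedge)$-application to this formula has two successors, $\Diamond\psi_1$ and the remaining block, and in the weakly aconjunctive case \emph{both} may contain active $\mu$-variables, so under your priorities both transitions are compartment-internal and both targets stay in $C_l$; no later modal letter can repair the branching at this letter, and a bad branch whose $\mu$-thread cycles through the conjunction hits such states infinitely often. (Your argument would suffice for the strictly aconjunctive fragment, where at most one conjunct carries an active $\mu$-variable.) This is exactly where the paper's proof introduces its key device: the block $\Diamond\psi_1\wedge\ldots\wedge\Diamond\psi_n\wedge\Box(\psi_1\vee\ldots\vee\psi_n)$ is treated as a single compound formula to which no further propositional tracking is applied, so the choice among the $\psi_i$ is deferred to the modal letter, whose principal-formula component determines the tracked successor uniquely. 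Without this (or an equivalent device) the constructed automaton is simply not limit-deterministic.

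Two smaller points. First, your parity bookkeeping is off: you give $(\wedge)$-transitions into formulas without active $\mu$-variables the priority $\mathsf{ad}(\phi)+1$, but if $\mathsf{ad}(\phi)$ is odd this priority is even, so such transitions remain inside the top compartments and can even create spurious accepting runs; the paper avoids this by first fixing an odd $k\in\{\mathsf{ad}(\phi),\mathsf{ad}(\phi)+1\}$ and pricing \emph{every} transition by $k-\mathsf{al}(\text{target})$, so that alternation-level-$0$ targets always receive the odd top priority $k$. Second, by pricing only $(\eta)$-transitions you still owe an argument bridging ``the outermost fixpoint literal unfolded infinitely often is a $\mu$'' and the paper's official condition that $\min(\mathsf{Inf}(\mathsf{al}\circ t))$ is odd; the paper's scheme of pricing all transitions by the target's alternation level matches that definition directly and needs no such bridge.
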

\begin{proof}[Sketch]
The automaton nondeterministically guesses formulas to be tracked, one at a time; 
the set of states of the automaton is the Fischer-Ladner closure of $\phi$.
The priorities of the transitions in the automaton are derived from the alternation level of the target formula
of the respective transition; then every word $w\in L(\mathcal{A})$ that encodes some branch encodes a bad branch. 
Once a deferral is tracked, weak aconjunctivity
implies that all compartments to which the tracked formula belongs are internally deterministic; this is
the case since for conjunctions 
$\psi=\psi_0\wedge \Diamond \psi_1\wedge\ldots \wedge\Diamond\psi_n\wedge\Box(\psi_1\vee\ldots\vee\psi_n)$
-- the only case that can introduce nondeterminism -- 
each next modal step determines just one of the formulas $\psi_i$ that has to be tracked; the conjunct
$\psi_0$ does not contain active $\mu$-variables, so tracking it causes the automaton to leave all compartments to
which $\psi$ belongs. Thus the automaton is limit-deterministic.\qed
\end{proof}

\begin{example} We consider the aconjunctive formula 
\begin{align*}
\phi=\mu X.(\;p\wedge\nu Y.\;(\Diamond (Y\wedge p)\vee \Diamond X))
\end{align*}
which expresses the existence of a finite or infinite path on which $p$ holds everywhere.
We have the $\phi$-deferrals $\phi\epsilon$, $\psi:=(p\wedge\nu Y.\;(\Diamond (Y\wedge p)\vee \Diamond X))\sigma_1$, $\theta:=(\nu Y.\;(\Diamond (Y\wedge p)\vee \Diamond X))\sigma_1$,
$\chi:=(\Diamond (Y\wedge p)\vee \Diamond X)\sigma_2$, $(\Diamond(Y\wedge p))\sigma_2$, $\tau:=(Y\wedge p)\sigma_2$, 
$Y\sigma_2$, $\Diamond X\sigma_2$ and $X\sigma_2$,
where $\sigma_1=[X\mapsto \phi]$ and $\sigma_2=[Y\mapsto\psi];\sigma_1$.
We consider a pre-tableau $P_\phi$ for $\phi$ 
and like in the proof of 
Lemma~\ref{lem:tracking_automaton}, we
construct the limit-deterministic tracking automaton $\mathcal{A}_\phi$, depicted below:\\

\begin{minipage}{.4\linewidth}
\vspace{-35pt}$P_\phi$:
\begin{tiny}
\begin{prooftree}
\alwaysRootAtTop
\AxiomC{{\textbf{7: 3}}}
\LeftLabel{\tiny$(\wedge)$}
\UnaryInfC{{\textbf{6: }$\varsigma$}}
\LeftLabel{\tiny$(\Diamond)$}
\UnaryInfC{{\textbf{5: }$p,\Diamond\varsigma$}}
\AxiomC{{\textbf{9: 1}}}
\LeftLabel{\tiny$(\Diamond)$}
\UnaryInfC{{\textbf{8: }$p,\Diamond \phi$}}
\LeftLabel{\tiny$(\vee$)}
\BinaryInfC{{\textbf{4: }$p,\chi$}}
\LeftLabel{\tiny$(\nu)$}
\UnaryInfC{{\textbf{3: }$p,\theta$}}
\LeftLabel{\tiny$(\wedge)$}
\UnaryInfC{{\textbf{2: }$\psi$}}
\LeftLabel{\tiny$(\mu)$}
\UnaryInfC{{\textbf{1: }$\phi$}}
\end{prooftree}
\end{tiny}
    \end{minipage}%
$\qquad$
    \begin{minipage}{.6\linewidth}
$\mathcal{A}_\phi$:\\
\tikzset{every state/.style={minimum size=15pt}}
\begin{tiny}
  \begin{tikzpicture}[
		auto,
    node distance=1.2cm,
    semithick
    ]
     \node[state,initial above] (e) {$\phi$};
     \node[state] (f) [right of=e] {$\psi$};
     \node[state] (g) [below of=f] {$\theta$};
     \node[state] (h) [below of=g] {$\chi$};
     \node[state] (i) [right of=h] {$\Diamond\varsigma$};
     \node[state] (j) [right of=g] {$\varsigma$};
     \node[state] (k) [left of=g] {$\Diamond\phi$};
     \node[state] (p) [right of=f] {$p$};
     \path[->] (e) edge node [above] {$(\mu),2$} (f);
     \path[->] (f) edge node [left] {$(\wedge),1$} (g);
     \path[->] (f) edge node [above] {$(\wedge),3$} (p);
     \path[->] (g) edge node [right] {$(\nu),1$} (h);
     \path[->] (h) edge node [above] {$(\vee)_l,1$} (i);
     \path[->] (h) edge node [below] {$(\vee)_r,1\;\;\;\;\;\;\;\;$} (k);
     \path[->] (i) edge node [right] {$(\Diamond),1$} (j);
     \path[->] (j) edge node [right] {$(\wedge)$,3} (p);
     \path[->] (j) edge node [above] {$(\wedge)$,1} (g);
  	 \path[->] (k) edge node [left] {$(\Diamond)$,2} (e);

  \end{tikzpicture}
\end{tiny}
\\
\end{minipage} 
The priorities in $\mathcal{A}_\phi$ are derived as follows:
As $\ad(\phi)=2$ is even, we put $k=\ad(\phi)+1=3$; since $\al(\phi)=\al(\psi)=1$, $\alpha(\phi,(\mu),\psi)=\alpha(\Diamond\phi,(\Diamond),\phi)=k-\al(\phi)=2$
and since $\al(p)=0$, $\alpha(\psi,(\wedge),p)=\alpha(\varsigma,(\wedge),p)=k-\al(\phi)=3$. All other formulas have alternation level 2
and transitions to them obtain priority 1.
The tracking automaton accepts exactly those branches in $P_\phi$	that start at node \textbf{1}
and take the loop through node \textbf{9} infinitely often; in these branches, $\phi$ can be
tracked forever and evolves to $\phi$ infinitely often, i.e. their dominating formula is
the least fixpoint formula $\phi$. All other branches 
loop through node \textbf{7} without passing node~\textbf{9} from some point on; 
their dominating fixpoint formula is $\theta$, a greatest fixpoint formula.
We observe that due to the aconjunctivity
of $\phi$, $\mathcal{A}_\phi$ is limit-deterministic since the only two nondeterministic
states $\psi$ and $\varsigma$ each have only one outgoing $(\wedge)$-transition with priority less than $k=3$.

\end{example}

\noindent Given a weakly aconjunctive formula $\phi$, we use Lemma~\ref{lem:tracking_automaton} to construct a limit-deterministic tracking automaton $\mathcal{A}_\phi$
with $L(\mathcal{A}_\phi)\cap\mathsf{Branch}(\phi)=\mathsf{BadBranch}(\phi)$.
Then we put Lemma~\ref{lem:npanba} to use to obtain an equivalent BA in which
all states from $Q=\mathsf{reach}(\pi_3[F])$ are \emph{levelled deferrals}, i.e. pairs $(\psi,q)$ consisting of a deferral $\psi$ and
a number $q\leq \lceil\frac{k}{2}\rceil$, 
the \emph{level} of the pair $(\psi,q)$; the level $q$ encodes the odd alternation level $2q-1$.
A levelled deferral $(\psi,q)$ is \emph{active} if $\mathsf{al}(\psi)=2q-1$ and the automaton
accepts branches which contain a levelled deferral that is active infinitely often without being finished.
The set $\overline{Q}$ is just a subset of $\mathbf{F}$.
Next we use Theorem~\ref{thm:edba} to transform this BA to a DPA $\mathcal{B}_\phi$ with
 $L(\mathcal{A}_\phi)=L(\mathcal{B}_\phi)$. We complement $\mathcal{B}_\phi$ to 
a DPA $\mathcal{C}_\phi=(W,\Sigma_{\phi},\delta,\phi,\alpha)$ by decreasing the priority of each
state in $\mathcal{B}_\phi$ by one; 
we have $L(\mathcal{C}_\phi)=\ov{L(\mathcal{B}_\phi)}$, that is,
$\mathcal{C}_\phi$ accepts exactly those words that encode only `good' branches,
if they encode some branch in some pre-tableau for $\phi$.
By construction,
$|W|\in\mathcal{O}((nk)!)$ and $\mathcal{C}_\phi$ has at most $nk+1$ priorities, and
(recalling Definitions~\ref{defn:permdet} and~\ref{defn:edpatoedba}) the states in the carrier $W$ of $\mathcal{C}_\phi$ 
are of the shape $(U,l)$, where $U$ is a subset of $\mathbf{F}$ and $l$ is a partial permutation of levelled deferrals. For a transition
$t=((U,l),r,(V,l'))$ with $(U,l),(V,l')\in W$, $r\in\Sigma_{\phi}$, if $\alpha(t)=2(n-a)+1$, then~$a$
is the lowest number such that $\mathsf{al}(\phi)=2q-1$, where
$l'(a)=(\phi,q)$ and the $a$-th element of $l$ is not removed by the transition $t$
(i.e. $\alpha(t)$ references the oldest levelled deferral in $l'$ that is active but not removed by the transition $t$) and
if $\alpha(t)=2(n-r)+2$, then $\alpha(t)$ is the index
of the oldest levelled deferral $(\phi,2q-1)$ that is finished (i.e. removed from $l$) in the transition $t$ 
of the automaton $\mathcal{C}_\phi$, which means
that the according $r$-transition in $\mathcal{A}_\phi$ makes $\phi$ leave its 
$2q-1$-compartment. For a state $v=(U,l)$, we define the \emph{label} $\Gamma(v)$ of $v$ as $\Gamma(v)=U$.

\subsection{Permutation Games}

The deterministic parity automaton $\mathcal{C}_\phi$ can now be combined with
applications of tableau rules from~$\mathcal{R}$ to form
a satisfiability game for $\phi$. We proceed to recall the
definition of parity games and some ensuing basic notions.
A \emph{parity game} is a graph $\mathcal{G}=(V,E,\alpha)$ that consists of a set of nodes $V$, a set of edges $E\subseteq V\times V$
and a priority function $\alpha:E\to\mathbb{N}$, assigning priorities to \emph{edges}.
We assume $V=V_\exists\dotcup V_\forall$, that is, every node in $V$
either belongs to player $\mathsf{Eloise}$ ($V_\exists$) or to player $\mathsf{Abelard}$ ($V_\forall$).
A \emph{play} $\rho$ of $\mathcal{G}$ is a (possibly infinite) sequence $v_0v_1\ldots$ such that for
all $i\geq 0$, $v_i\in V$ and $(v_i,v_{i+1})\in E$. 
A play $\rho$ of $\mathcal{G}$ is won by $\mathsf{Eloise}$ 
if and only if $\rho$ is finite and ends in a node that belongs to $\mathsf{Abelard}$ or $\rho$ is infinite and 
$\max(\mathsf{Inf}(\alpha\circ\mathsf{trans}(\rho)))$ is even
 (where $\mathsf{trans}(\rho)$ is defined by $\mathsf{trans}(\rho)(i)=(\rho(i),\rho(i+1))$); $\mathsf{Abelard}$ wins a play $\rho$ if and only if $\mathsf{Eloise}$ does not win $\rho$.
A (memoryless) strategy $s:V\pfun V$ assigns moves to states. A play $\rho$ \emph{conforms} to a strategy $s$ if for
all $\rho(i)\in\mathsf{dom}(s)$, $\rho(i+1)=s(\rho(i))$.
$\mathsf{Eloise}$ has a winning strategy for a node $v$ if
there is a strategy $s:V_\exists\to V$ such that every play of $\mathcal{G}$ that starts at $v$ and conforms to $s$ is
won by $\mathsf{Eloise}$; we have a dual notion of winning strategies for $\mathsf{Abelard}$.
The winning regions $\mathsf{win}_\exists(\mathcal{G})$ and $\mathsf{win}_\forall(\mathcal{G})$
are the sets of those nodes for which $\mathsf{Eloise}$ and $\mathsf{Abelard}$ have winning strategies, respectively.
\emph{Solving} a parity game $\mathcal{G}$ (locally) for a particular node $v\in V$ amounts to
computing the winner of $v$.

Now we are ready to 
define permutation games for
weakly aconjunctive formulas $\phi$,
using the DPA $\mathcal{C}_\phi=(W,\Sigma_{\phi},\delta,\phi,\alpha)$ from the previous section.

\begin{definition}[Permutation games]
Let $\phi$ be a weakly aconjunctive formula.
We define the \emph{permutation game} $\mathcal{G}(\phi)=(W,E,\beta)$ to be a parity game that has
the carrier of $\mathcal{C}_\phi$ as set of nodes. 
For every node $v\in W$ for which $\Gamma(v)$ is not a state, we fix a single rule that is to be applied to $\Gamma(v)$
and a single principal formula $\psi_v\in\Gamma(v)$ to which the rule is to be applied. If $(\vee)$ is to be applied to $\Gamma(v)$, then we put $v\in W_\exists$; otherwise, $v\in W_\forall$. In particular, all state nodes are contained in $W_\forall$.
For $v\in W$, we put $E(v)= \bigcup\{\delta(v,a)\mid a\in \Sigma_v\}$,
where $\Sigma_v\subseteq\Sigma_{\phi}$ consists of all letters $a$ that encode the application of some rule to $\Gamma(v)$ with the condition that the principal
formula of the rule application must be $\psi_v$ if $v$ is not a state node.
Finally, we put $\beta(v,w)=\alpha(v,a,w)$ for $(v,w)\in E$, where $a\in\Sigma_v$ encodes the rule application that
leads from $v$ to $w$.
\end{definition}

\begin{theorem}
Let $\phi$ be a closed, irreducible and weakly aconjunctive formula. Then we have 
$(\{\phi\},[\,])\in\mathsf{win}_\exists(\mathcal{G}(\phi))$
if and only if $\phi$ is satisfiable.
\end{theorem}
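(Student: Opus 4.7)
The plan is to prove both implications separately by reducing game-theoretic statements to statements about branches of pre-tableaux and then invoking Theorem~\ref{thm:sattab}. The crucial bridge is the observation that every play $v_0 v_1\ldots$ of $\mathcal{G}(\phi)$ induces a sequence of letters $a_0 a_1\ldots\in\Sigma_\phi^\omega$ encoding the rule applications taken and a sequence of labels $\Gamma(v_0),\Gamma(v_1),\ldots$ forming a branch of some pre-tableau for $\phi$. Because $\mathcal{C}_\phi$ is deterministic and $\beta$ inherits the priorities of $\mathcal{C}_\phi$, the parity verdict of the play on the edge sequence coincides with the acceptance verdict of $\mathcal{C}_\phi$ on $a_0 a_1\ldots$; by construction of $\mathcal{C}_\phi$ from $\mathcal{A}_\phi$ via complementation, this means that Eloise wins the play exactly when the induced branch is not a bad branch of that pre-tableau.

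For the ``if'' direction, assume $\phi$ is satisfied at some world $w_0$ of a Kripke structure $\mathcal{K}$. I would define a positional Eloise strategy $s$ by associating with every reachable game node $v$ a world $w_v$ of $\mathcal{K}$ such that $w_v$ satisfies every formula in $\Gamma(v)$, starting with $w_{(\{\phi\},[\,])}=w_0$. At a $(\vee)$-node, Eloise picks the disjunct holding at $w_v$ and keeps $w_v$; at forced $(\wedge)$- and $(\eta)$-nodes the invariant is preserved by direct semantic reasoning and fixpoint unfolding; at a state node, when Abelard selects some $\langle a\rangle\psi$ as principal, Eloise sets $w_{v'}$ to an $a$-successor of $w_v$ satisfying $\psi$ together with all box formulas in $\Gamma(v)$. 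The invariant forbids any $\bot$ or contradictory pair $p,\neg p$ from appearing in a label of a conforming play, so Eloise never gets stuck; moreover, a standard well-foundedness argument on ordinal approximants of least fixpoints in $\mathcal{K}$ shows that no thread along the world sequence $w_{v_0},w_{v_1},\ldots$ can be a $\mu$-thread, so the induced branch is not bad and Eloise wins by the bridging observation.

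For the converse, assume Eloise has a positional winning strategy $s$ from $(\{\phi\},[\,])$. I would extract a candidate tableau $T$ whose vertices are the game nodes reachable from $(\{\phi\},[\,])$ via $s$-conforming plays, with $(\vee)$-edges restricted to those picked by $s$ but all Abelard alternatives at state nodes and all forced successors at other nodes retained; relabelling each vertex $v$ by $\Gamma(v)$ yields a pre-tableau for $\phi$ by construction of $\mathcal{G}(\phi)$. Any infinite branch of $T$ arises from an $s$-conforming play, is therefore won by Eloise, and hence by the bridging observation is not bad; so $T$ is a tableau for $\phi$ and Theorem~\ref{thm:sattab} delivers satisfiability.

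The main obstacle is to justify that one may indeed reason purely at the branch level even though game nodes pair tableau labels with the partial-permutation memory of $\mathcal{C}_\phi$. Determinism of $\mathcal{C}_\phi$ reduces the permutation component to a function of the played letter sequence, so Eloise's strategy needs to speak only about labels and worlds, and the branch-level acceptance conditions engineered in Definitions~\ref{defn:permdet} and~\ref{defn:edpatoedba} plus Lemma~\ref{lem:tracking_automaton} guarantee that bad branches of pre-tableaux for $\phi$ correspond exactly to plays lost by Eloise. A secondary subtlety, handled as part of the ``if'' direction, is the well-foundedness argument that rules out $\mu$-threads in the branches generated by the model-guided strategy, which is the standard soundness argument for tableau systems for the $\mu$-calculus.
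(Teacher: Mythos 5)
Your proposal is correct in substance, and its central bridge --- every play of $\mathcal{G}(\phi)$ traces a branch of a pre-tableau, determinism of $\mathcal{C}_\phi$ together with the inherited priorities makes the parity verdict of a play coincide with acceptance of the played letter sequence by $\mathcal{C}_\phi$, and complementation of the determinized tracking automaton turns this into ``$\mathsf{Eloise}$ wins an infinite play iff the traced branch is not bad'' --- is precisely the content hidden in the paper's one-line argument that, by construction, $\mathsf{Eloise}$ wins $(\{\phi\},[\,])$ iff there is a tableau for $\phi$ labelled by $\Gamma$, after which Theorem~\ref{thm:sattab} finishes both directions. Where you genuinely diverge is the direction from satisfiability to a winning strategy: you build $\mathsf{Eloise}$'s strategy directly from a model and then exclude $\mu$-threads by the ordinal-approximant well-foundedness argument. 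This is sound, but it re-proves the ``satisfiable implies tableau exists'' half of Theorem~\ref{thm:sattab}, which the paper deliberately imports as a black box; the shorter route intended by the paper is to obtain a tableau from Theorem~\ref{thm:sattab} and let $\mathsf{Eloise}$ simply play that tableau's disjunction choices, so that every conforming play traces a branch of the tableau, hence a non-bad branch, hence a word accepted by $\mathcal{C}_\phi$, hence a play of even maximal priority. Your route buys independence from that half of the imported theorem at the price of redoing its semantic core (which you only sketch as ``standard''), whereas the paper keeps the entire game--tableau correspondence purely syntactic and delegates all semantics to Theorem~\ref{thm:sattab}; your converse direction, extracting a pre-tableau from the $s$-conforming reachable subgraph and observing that all its infinite branches are non-bad, matches the paper's argument exactly.
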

\begin{proof}
  By construction, $\mathsf{Eloise}$ wins $(\{\phi\},[\,])$ if and only if there
  is a tableau for $\phi$ (labelled by the labelling function
  $\Gamma$); we are done by Theorem~\ref{thm:sattab}. \qed
\end{proof}

\noindent Due to the relatively simple structure and the
asymptotically smaller size of the determinized automata
$\mathcal{C}_\phi$, the resulting permutation games are somewhat easier to
construct and can be solved asymptotically faster than the structures
created by standard satisfiability decision procedures for the full
$\mu$-calculus (e.g.~\cite{FriedmannLange13a,EmersonJutla99}) which
employ the full Safra/Piterman-construction; note however, that our
method is restricted to the weakly aconjunctive fragment.

\begin{corollary}
The satisfiability of weakly aconjunctive $\mu$-calculus formulas can be decided by solving
parity games of size $\lO((nk)!)$ and $\lO(nk)$ priorities. 
\end{corollary}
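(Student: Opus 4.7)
The plan is to observe that this corollary is essentially a bookkeeping composition of the pipeline developed throughout Section~\ref{section:games}, combined with the standard fact that parity games can be solved. Given a weakly aconjunctive input formula $\phi$ of size $n$ and alternation depth $k$, I would chain the following five constructions and track sizes and indices at each step.

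First, I would invoke Lemma~\ref{lem:tracking_automaton} to obtain a limit-deterministic tracking PA $\mathcal{A}_\phi$ of size at most $n$ and index at most $k+1$ recognizing precisely the bad branches among encoded pre-tableau branches for $\phi$. Second, I would apply the PA-to-BA construction of Definition~\ref{defn:edpatoedba}, which by Lemma~\ref{lem:npanba} yields an equivalent limit-deterministic BA of size at most $n\lceil k/2\rceil + n \in \mathcal{O}(nk)$. Third, I would determinize this BA via Theorem~\ref{thm:edba}, obtaining a DPA of size $\mathcal{O}((nk)!)$ with at most $2(nk)+1 \in \mathcal{O}(nk)$ priorities. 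Fourth, complementing by decreasing each priority by one gives $\mathcal{C}_\phi$ of the same size and index recognizing words that encode only good branches.

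Next I would form the permutation game $\mathcal{G}(\phi)$ as in the preceding definition; its carrier is the carrier of $\mathcal{C}_\phi$, so its size stays in $\mathcal{O}((nk)!)$ and its edge priority function inherits $\mathcal{O}(nk)$ priorities. By the theorem that precedes this corollary, $\phi$ is satisfiable if and only if $\mathsf{Eloise}$ wins $(\{\phi\},[\,])$ in $\mathcal{G}(\phi)$. Hence deciding satisfiability of $\phi$ reduces to solving a parity game whose size and index meet the stated bounds.

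Honestly, there is no hard mathematical obstacle here, since each ingredient has already been certified by a preceding result. The only care required is to verify that the asymptotic factors truly compose: that the tracking PA's $n$ states combined with the PA-to-BA blow-up factor $\lceil k/2\rceil + 1$ yield $\mathcal{O}(nk)$ BA states, that Theorem~\ref{thm:edba} then produces $\mathcal{O}((nk)!)$ states with $\mathcal{O}(nk)$ priorities, and that neither the complementation (which only shifts priorities) nor the game construction (which re-uses the DPA carrier and only restricts transitions by principal-formula choices) inflates these bounds. Once this routine accounting is done, the corollary follows immediately.
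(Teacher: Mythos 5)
Your proposal is correct and follows exactly the paper's own route: the corollary is obtained by composing the tracking-automaton lemma, the PA-to-BA translation, the permutation determinization theorem, complementation by priority shift, and the permutation-game theorem, with the same size and priority bookkeeping (any minor discrepancy in the exact priority count, e.g.\ $2nk+1$ versus $nk+2$, is immaterial since both are $\mathcal{O}(nk)$).
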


\noindent The winning strategies for $\mathsf{Eloise}$ or
$\mathsf{Abelard}$ in these games define models for or  refutations of the respective formulas, so that we have

\begin{corollary}
Satisfiable weakly aconjunctive $\mu$-calculus formulas have models of size $\lO((nk)!)$.
\end{corollary}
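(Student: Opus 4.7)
The plan is to extract the model from a winning strategy for $\mathsf{Eloise}$ in the permutation game. Suppose $\phi$ is a satisfiable weakly aconjunctive formula of size $n$ and alternation depth $k$. By the preceding theorem, $\mathsf{Eloise}$ wins the node $(\{\phi\},[\,])$ in $\mathcal{G}(\phi)$, and because parity games enjoy memoryless determinacy, she has a memoryless winning strategy $s:W_\exists\pfun W$.

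First I would restrict attention to the subgraph $\mathcal{G}_s$ consisting of those nodes of $\mathcal{G}(\phi)$ that are reachable from $(\{\phi\},[\,])$ via plays conforming to $s$. Since $\mathcal{G}_s$ is a subgraph of $\mathcal{G}(\phi)$, its size is bounded by $|W|\in \lO((nk)!)$. By definition of permutation games and of $\mathcal{C}_\phi$, every infinite play through $\mathcal{G}_s$ is won by $\mathsf{Eloise}$ and hence, by construction of $\mathcal{C}_\phi$ as the complement of $\mathcal{B}_\phi$, corresponds to a branch that is not bad; in other words, the labels $\Gamma(\cdot)$ of the nodes in $\mathcal{G}_s$ form a tableau for $\phi$ in the sense of Theorem~\ref{thm:sattab}.

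Next I would define a Kripke structure whose state set consists of exactly those nodes $v$ of $\mathcal{G}_s$ for which $\Gamma(v)$ is saturated (the \emph{state nodes}, all belonging to $W_\forall$). The propositional valuation is read off $\Gamma(v)\cap P$, and for every $a\in A$, one puts $R_a(v,v')$ whenever there is an application of the $(\langle a\rangle)$-rule at $v$ whose successor $\mathcal{G}_s$-play eventually reaches the state node $v'$. Since the number of state nodes is bounded by $|W|\in\lO((nk)!)$, the constructed model has size $\lO((nk)!)$.

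The main obstacle is to verify that this structure is indeed a model of $\phi$, i.e.\ that $\phi\in\Gamma(v)$ implies $v\in\sem{\phi}$ for all state nodes $v$. Standard arguments for tableau-based $\mu$-calculus decision procedures handle the Boolean and modal cases by induction on the structure of $\Gamma$, using saturation at state nodes; the delicate part is the fixpoint clauses, where one has to argue that no least-fixpoint formula is indefinitely deferred along any thread in the extracted model. This follows precisely because $\mathcal{G}_s$ contains no bad branch: any thread that would witness an unsatisfied $\mu$-formula would, by Lemma~\ref{lem:tracking_automaton} and the construction of $\mathcal{C}_\phi$, correspond to an infinite play won by $\mathsf{Abelard}$, contradicting the choice of $s$ as a winning strategy. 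Combining this soundness argument with the size bound on the state nodes of $\mathcal{G}_s$ yields the claimed model of size $\lO((nk)!)$.
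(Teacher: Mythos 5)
Your proposal matches the paper's argument: the paper derives this corollary directly from the observation that a (memoryless) winning strategy for $\mathsf{Eloise}$ in the permutation game $\mathcal{G}(\phi)$ — whose node set is the carrier of $\mathcal{C}_\phi$, of size $\lO((nk)!)$ — defines a tableau and hence a model, so the model size is bounded by the game size. Your elaboration (restricting to the strategy subgraph, taking state nodes as Kripke states, and invoking the absence of bad branches via the tracking automaton and Theorem~\ref{thm:sattab}) is exactly the intended, standard instantiation of that one-line argument.
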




\section{Implementation and Benchmarking}\label{section:cool}

We have implemented the permutation satisfiability games 
as an extension of
the \emph{Coalgebraic Ontology Logic Reasoner}
(COOL)~\cite{GorinEA14}, a generic reasoner for coalgebraic modal
logics\footnote{available at
\url{https://www8.cs.fau.de/research:software:cool}}. COOL achieves its
genericity by instantiating an abstract reasoner that works for
all coalgebraic logics to concrete instances of logics. To incorporate
support for the aconjunctive coalgebraic $\mu$-calculus, we have extended the global
caching algorithm that forms the core of COOL to generate and solve
the corresponding permutation games,
with optional \emph{on-the-fly} solving; games are solved using 
either our own implementation of the fixpoint iteration algorithm
for parity games (as in~\cite{BruseEA14}) or
PGSolver~\cite{FriedmannLange09}, which supports
a range of game solving algorithms.
Instance logics implemented in COOL currently include linear-time, 
relational, monotone, and alternating-time
logics, as well as any logics that arise as combinations thereof. In
particular, this makes COOL, to our knowledge, the only implemented
reasoner for the aconjunctive fragments of the alternating-time
$\mu$-calculus and Parikh's game logic.

Although our tool supports the aconjunctive coalgebraic $\mu$-calculus,
we concentrate on the standard relational aconjunctive $\mu$-calculus for experiments,
as this allows us to compare our implementation with the reasoner 
MLSolver~\cite{FriedmannLange10MLSolver}, which constructs satisfiability games using the
Safra/Piterman-construction and hence supports the full relational $\mu$-calculus;
MLSolver uses PGSolver for game solving.

To test the implementations, we devise two series of hard aconjunctive 
formulas with deep alternating nesting of fixpoints. The following formulas encode that each reachable state in a Kripke structure has one of $n$ priorities 
(encoded by atoms $q_i$ for $1\leq i \leq n$)
and belongs to either $\mathsf{Eloise}$ ($q_e$) or $\mathsf{Abelard}$ ($q_a$):
\small
\begin{align*}
\phi_\mathsf{aut}(n) &= \mathsf{AG}(\bigvee_{1\leq i\leq n} (q_i\wedge\bigwedge_{j\neq i}\neg q_j))&
\phi_\mathsf{game}(n)&=\phi_\mathsf{aut}(n)\wedge\mathsf{AG}((q_e\wedge\neg q_a)\vee (\neg q_e\wedge q_a))
\end{align*}
\normalsize
Here we use $\mathsf{AG}\;\psi$ to abbreviate 
$\nu X.\, (\psi \wedge\Box X)$.
Then the non-emptiness regions in parity automata and $\mathsf{Eloise}$'s winning region
in parity games can be specified by the following
\emph{aconjunctive} formulas (where $\hearts\in\{\Diamond,\Box\}$):
\begin{align*}
\phi_\mathsf{ne}(n) &= \eta X_n.\ldots.\nu X_2.\mu X_1. \psi_\Diamond & \psi_\hearts&=\textstyle\bigvee_{1\leq i\leq n}(q_i\wedge\hearts X_i) \\
\phi_\mathsf{win}(n) &= \eta X_n.\ldots.\nu X_2.\mu X_1.\phi_{\mathsf{strat}}(\psi_\hearts) & \phi_{\mathsf{strat}}(\psi_\hearts)&=(q_e\wedge\psi_\Diamond)\vee (q_a\wedge\psi_\Box)
\end{align*}
Furthermore, we define (for $\hearts\in\{\Diamond,\Box\}$)
\begin{align*}
\theta_\hearts(i) & = (q_i\wedge \hearts Y)\vee \textstyle\bigvee_{i<j\leq n}(q_j\wedge\hearts X)\vee \textstyle\bigvee_{1\leq j\leq i}(q_j\wedge\hearts Z)
\end{align*}

\vspace{-25pt}
\begin{figure}
    \centering
    \begin{minipage}{0.45\textwidth}
        \centering
        \begin{tikzpicture}

\begin{semilogyaxis}[
minor tick num=1,
xtick={0,2,4,6,8,10,12,14,16,18,20},
ytick={0.001,0.01,0.1,1,10,100,1000},
yticklabels={$0.001$,$0.01$,$0.1$,$1$,$10$,$100$,$1000$},
every axis y label/.style=
{at={(ticklabel cs:0.5)},rotate=90,anchor=center},
every axis x label/.style=
{at={(ticklabel cs:0.5)},anchor=center},
tiny,
width=6cm,
height=6cm,
transpose legend,
legend columns=2,
legend style={at={(0.5,-0.13)},anchor=north},
ymode=log,
xlabel={value of n},
ylabel={runtime (s)},
xmin=0,
xmax=14,
ymin=0.001,
ymax=100,
legend entries={COOL,COOL on-the-fly,MLSolver,MLSolverOpt}]

%
\addplot[mark=triangle*,mark options={scale=0.9}] table {

 01 0.003314269
 02 0.003662543
 03 0.005884770
 04 0.012555751
 05 0.018308446
 06 0.073683853
 07 0.120555180
 08 0.372444907
 09 0.762148582
 10 2.440859045
 11 2.804672695
 12 17.792399292
 13 28.907151012

};
\addplot[mark=square*,mark options={scale=0.7}] table {

 01	 0.005665382
 02	 0.008152431
 03	 0.043433778
 04	 1.030009846
 05	 6.500200963
 

};
%
\addplot [mark=o, mark options={scale=0.9}] table {
 01 0.004192700
 02 0.018336507
 03 0.067932390
 04 0.465974425
 05 1.286113652
 06 6.466946388
 07 17.997326345
 08 82.524166961

};
\addplot [mark=pentagon, mark options={scale=0.9}] table {
 01 0.003970929
 02 0.015065566
 03 0.044324115
 04 0.233305732
 05 0.614262219
 06 2.433101352
 07 5.602598234
 08 19.497790750
 09 46.674118111

};
\end{semilogyaxis}
\end{tikzpicture}
    \caption{Times for $\neg\theta_1(n)$ (unsatisfiable)}\label{fig:enpa_is_enba}
    \end{minipage}\quad
    \begin{minipage}{0.45\textwidth}
        \centering
        \begin{tikzpicture}

\begin{semilogyaxis}[
minor tick num=1,
xtick={0,2,4,6,8,10,12,14,16,18,20},
ytick={0.001,0.01,0.1,1,10,100,1000},
yticklabels={$0.001$,$0.01$,$0.1$,$1$,$10$,$100$,$1000$},
every axis y label/.style=
{at={(ticklabel cs:0.5)},rotate=90,anchor=center},
every axis x label/.style=
{at={(ticklabel cs:0.5)},anchor=center},
tiny,
width=6cm,
height=6cm,
transpose legend,
legend columns=2,
legend style={at={(0.5,-0.13)},anchor=north},
ymode=log,
xlabel={value of n},
ylabel={runtime (s)},
xmin=0,
xmax=14,
ymin=0.001,
ymax=100,
legend entries={COOL,COOL on-the-fly,MLSolver,MLSolverOpt}]
%
\addplot[mark=triangle*,mark options={scale=0.9}] table {

 01 0.003601202
 02 0.005545315
 03 0.007086871
 04 0.017020367
 05 0.034684609
 06 0.079790132
 07 0.188890539
 08 0.537395370
 09 0.944983616
 10 2.360038499
 11 5.502164128
 12 16.836148465
 13 46.449965578

};
\addplot[mark=square*,mark options={scale=0.7}] table {

 01 0.004997028
 02 0.013581729
 03 0.035175001
 04 0.310558031
 05 0.675793218
 06 6.247625326
 07 33.527906866


};

\addplot [mark=o, mark options={scale=0.9}] table {

 01 0.008368822
 02 0.027447152
 03 0.268239084
 04 0.649396429
 05 3.836329333
 06 6.587703640
 07 27.995679005
 08 44.138908385

};
\addplot [mark=pentagon, mark options={scale=0.9}] table {
 01 0.005812380
 02 0.015746107
 03 0.129773514
 04 0.279059063
 05 1.309766650
 06 2.226867748
 07 8.277059529
 08 12.354744792
 09 39.859538630
 10 57.507097526

};
\end{semilogyaxis}
\end{tikzpicture}
    \caption{Times for $\neg\theta_2(n)$ (unsatisfiable)}\label{fig:domgame}

    \end{minipage}
\end{figure}

\vspace{-15pt}
The following series of valid formulas states that parity automata
with $n$ priorities can be transformed to nondeterministic parity
automata with three priorities without affecting the non-emptiness
region:
\begin{align*}
\theta_1(n):=\phi_{\mathsf{aut}}(n)\to(\phi_{\mathsf{ne}}(n)\leftrightarrow
\textstyle\bigvee_{i\text{ even}}
\mu X.\nu Y.\mu Z.\;\theta_\Diamond(i))
\end{align*}
Similarly, if $\mathsf{Eloise}$ wins a parity game with $n$ priorities,
then she can ensure that in each play, each odd priority $1\leq i\leq n$ is
visited only finitely often, unless a priority greater than $i$ is
visited infinitely often (the converse does not hold in
general~\cite{DawarGraedel08}):
\vspace{-5pt}
\begin{align*}
\theta_2(n):=\phi_{\mathsf{game}}(n)\to(\phi_{\mathsf{win}}(n)\rightarrow
\bigwedge_{i\text{ odd}}
\nu X.\mu Y.\nu Z.\;&\phi_\mathsf{strat}(\theta_\hearts(i))\;)
\end{align*}

\noindent Additionally, we devise two series of unsatisfiable formulas
that exhibit the advantages of COOL's global caching and
on-the-fly-solving capabilities.  These formulas are inspired by the
CTL-formula series $\mathsf{early}(n,j,k)$ and
$\mathsf{early_{gc}}(n,j,k)$ from~\cite{HausmannEA16} but contain
fixpoint-alternation of depth $2^k$ inside the subformula $\theta$:
\vspace{-3pt}
\begin{small}
\begin{align*}
\mathsf{early\text{-}ac}(n,j,k) = & \;\mathit{start}_p \wedge \mathsf{init}(p,n) \wedge \mathsf{init}(r,k) \wedge \mathsf{AG}\;((r \to \mathsf{c}(r,k))\wedge (p \to \mathsf{c}(p,n))) \wedge\\
               & \; \mathsf{AG}\;((\textstyle \bigwedge_{0\leq i\leq j} p_i \to \Diamond(\mathit{start}_r \wedge \theta)) \wedge \neg(p \wedge r)\wedge (r\to \Box\; r))\\
\mathsf{early\text{-}ac}_{\mathsf{gc}}(n,j,k) = & \;\mathsf{early\text{-}ac}(n,j,k) \wedge \mathit{b} \wedge \mathsf{init}(q,n) \wedge \mathsf{AG}\;(\neg(p \wedge q)\wedge \neg(q \wedge r))\wedge\\
        & \;\mathsf{AG}\;((q \to c(q,n)) \wedge \mathsf{AF}\;\mathit{b} \wedge (\mathit{b} \to (\Diamond\;p \wedge \Diamond\;\mathit{start}_q 
                               \wedge \Box\;\neg \mathit{b})))\\
															\mathsf{init}(x,m) = &\; \mathsf{AG}\; ((\mathit{start}_x \to (x \wedge\textstyle \bigwedge_{0\leq i<m} \neg x_i))\wedge (x\to \Diamond\; x))\\
\theta = & \; \eta X_{(2^k)}.\ldots.\nu X_2.\mu X_1.\textstyle\bigvee_{1\leq i\leq 2^k}(\mathsf{bin}(r,i-1)\wedge \Diamond X_i),
\end{align*}
\end{small}

\vspace{-25pt}
\begin{figure}
    \centering
    \begin{minipage}{0.45\textwidth}
        \centering
        \begin{tikzpicture}

\begin{semilogyaxis}[
minor tick num=1,
xtick={0,2,4,6,8,10,12,14,16,18,20},
ytick={0.001,0.01,0.1,1,10,100,1000},
yticklabels={$0.001$,$0.01$,$0.1$,$1$,$10$,$100$,$1000$},
every axis y label/.style=
{at={(ticklabel cs:0.5)},rotate=90,anchor=center},
every axis x label/.style=
{at={(ticklabel cs:0.5)},anchor=center},
tiny,
width=6cm,
height=6cm,
transpose legend,
legend columns=2,
legend style={at={(0.5,-0.13)},anchor=north},
ymode=log,
xlabel={value of n},
ylabel={runtime (s)},
xmin=0,
xmax=20,
ymin=0.001,
ymax=1000,
legend entries={COOL,COOL on-the-fly,MLSolver,MLSolverOpt}]

%
\addplot[mark=triangle*,mark options={scale=0.9}] table {

 01 0.006231262
 02 0.010927707
 03 0.017449466
 04 0.027204472
 05 0.017496431
 06 0.019652914
 07 0.025382195
 08 0.048142076
 09 0.106646225
 10 0.216173437
 11 0.510011437
 12 1.111585836
 13 2.424101644
 14 5.188748123
 15 13.901191444

};
\addplot[mark=square*,mark options={scale=0.7}] table {

01 0.072251309
02 0.092663410
03 0.155972088
04 0.250158983
05 0.323399226
06 0.345304746
07 0.373683738
08 0.410003982
09 0.444002839
10 0.477533792
11 0.510487272
12 0.541856338
13 0.566666153
14 0.594493919
15 0.621080325
16 0.657493417
17 0.700306005
18 0.730832548
19 0.768449998
20 0.800184151

 21	 0.837830630
 22	 0.870720421
 23	 0.913465080
 24	 0.937972216
 25	 0.986445673
 26	 1.018156169
 27	 1.055400456
 28	 1.148819456
 29	 1.131077290
 30	 1.176384372
 31	 1.220402568
 32	 1.270158067
 33	 1.293985419
 34	 1.319552166
 35	 1.362999107
 36	 1.422863366
 37	 1.476002493
 38	 1.509853541
 39	 1.552414880
 40	 1.605384652
 41	 1.654317548
 42	 1.701799244
 43	 1.758973348
 44	 1.817587925
 45	 1.875011226
 46	 1.928383751
 47	 1.978932756
 48	 2.019480723
 49	 2.055387639
 50	 2.127391609
 51	 2.167167451
 52	 2.209224506
 53	 2.296797857
 54	 2.329293598
 55	 2.405896279
 56	 2.459868523
 57	 2.559764744
 58	 2.549671513
 59	 2.614932325
 60	 2.838176642
 61	 2.746929821
 62	 2.810084107
 63	 2.853541808
 64	 2.914404985
 65	 2.969762273
 66	 3.033379548
 67	 3.105339427
 68	 3.155917239
 69	 3.207080923
 70	 3.289633813
 71	 3.314102608
 72	 3.380358064
 73	 3.489473611
 74	 3.503599273
 75	 3.577604240
 76	 3.691695988
 77	 3.728830020
 78	 3.802174391
 79	 3.861474366
 80	 3.940708341
 81	 3.993518972
 82	 4.075693159
 83	 4.142926529
 84	 4.198993755
 85	 4.329940422
 86	 4.375351211
 87	 4.432482685
 88	 4.506050109
 89	 4.617395629
 90	 4.738801723
 91	 4.735750375
 92	 4.826366519
 93	 4.933422190
 94	 4.989241411
 95	 5.074525447
 96	 5.202370427
 97	 5.236857575
 98	 5.387387638
 99	 5.358901072
};
%
\addplot [mark=o, mark options={scale=0.9}] table {
 01 0.427635018
 02 0.960264149
 03 2.837231344
 04 9.485204872
 05 38.215066050
 06 152.670152551

};
\addplot [mark=pentagon, mark options={scale=0.9}] table {
 01 0.023472509
 02 0.022349282
 03 0.025716401
 04 0.028824404
 05 0.034786685
 06 0.041056318
 07 0.056034403
 08 0.089100560
 09 0.163032120
 10 0.335427989
 11 0.687209252
 12 1.474020662
 13 3.301203004
 14 7.273546910
 15 15.902871026
 16 35.088265348
 17 75.090113654
 18 163.131521850
 19 355.587440961
 20 772.415443661

};
\end{semilogyaxis}
\end{tikzpicture}
    \caption{$\mathsf{early}\text{-}\mathsf{ac}(n,4,2)$ (unsatisfiable)}\label{fig:early}
    \end{minipage}\quad
    \begin{minipage}{0.45\textwidth}
        \centering
        \begin{tikzpicture}

\begin{semilogyaxis}[
minor tick num=1,
xtick={0,2,4,6,8,10,12,14,16,18,20},
ytick={0.001,0.01,0.1,1,10,100,1000},
yticklabels={$0.001$,$0.01$,$0.1$,$1$,$10$,$100$,$1000$},
every axis y label/.style=
{at={(ticklabel cs:0.5)},rotate=90,anchor=center},
every axis x label/.style=
{at={(ticklabel cs:0.5)},anchor=center},
tiny,
width=6cm,
height=6cm,
transpose legend,
legend columns=2,
legend style={at={(0.5,-0.13)},anchor=north},
ymode=log,
xlabel={value of n},
ylabel={runtime (s)},
xmin=0,
xmax=20,
ymin=0.001,
ymax=1000,
legend entries={COOL,COOL on-the-fly,MLSolver,MLSolverOpt}]
%
\addplot[mark=triangle*,mark options={scale=0.9}] table {

 01 0.014368858
 02 0.021993548
 03 0.045374779
 04 0.176903013
 05 0.544324350
 06 1.915716061
 07 1.484471835
 08 9.515971747
 09 37.159390910
 10 160.796414006
 11 769.001393999


};
\addplot[mark=square*,mark options={scale=0.7}] table {
  01 0.422572123
  02 0.626352552
  03 1.523379575
  04 4.947750133
  05 1.717908676
  06 11.411818042
  07 5.439501540
  08 3.642606435
  09 15.831136950
  10 17.404179996
  11 18.462094702
  12 6.469337748
  13 9.984823231
  14 6.810827286
  15 6.393029331
  16 5.512443030
  17 6.686021095
  18 7.716896447
  19 19.603414830
  20 30.051589321

 21	 46.274203985
 22	 54.485861510
 23	 57.638827595
 24	 34.882702286
 25	 55.066434156
 26	 90.352377698
 27	 53.693900797
 28	 61.132051868
 29	 13.530712189
 30	 14.813994307
 31	 10.205159622
 32	 10.049598612
 33	 11.124567105
 34	 8.387190052
 35	 18.209036023
 36	 7.937295378
 37	 19.001044045
 38	 19.140983898
 39	 18.975270397
 40	 15.238065888
 41	 21.598725823
 42	 50.042500742
 43	 33.296100292
 44	 54.280548841
 45	 55.422940165

};

\addplot [mark=o, mark options={scale=0.9}] table {
   01	 59.872752244

};
\addplot [mark=pentagon, mark options={scale=0.9}] table {
  
	01 0.799936946
  02 2.125223125
  03 7.764350825
  04 31.856541079
  05 148.177090633
  06 814.145160275

};
\end{semilogyaxis}
\end{tikzpicture}
    \caption{$\mathsf{early}\text{-}\mathsf{ac}_\mathsf{gc}(n,4,2)$ (unsatisfiable)}\label{fig:earlygc}

    \end{minipage}
\end{figure}
\vspace{-15pt}
\noindent where $\mathsf{c}(x,m)$ encodes an $m$-bit counter using atoms $x_0,\ldots,x_{m-1}$ and
$\mathsf{bin}(r,i)$ denotes the binary encoding of the number $i$ using atoms $r_0,\ldots, r_{k-1}$.
The formulas $\mathsf{early\text{-}ac}(n,j,k)$ specify a loop $p$ of length $2^n$ that branches after
$j$ steps to a second loop $r$ of length $2^k$ on which the highest value of the counter
(which counts from $0$ to $2^{k}-1$ and then restarts at $0$) 
is required to be an even number. For constant $k$,
the contradiction on loop $r$
yields a small refutation which can be found early, using on-the-fly solving.
The formulas $\mathsf{early\text{-}ac}_\mathsf{gc}(n,j,k)$ extend this specification by stating
that a third loop $q$ of length $2^n$ is started from loop $p$ infinitely often. Procedures
with sufficient caching capabilities will have to (partially) explore this loop at most once.

We compare the runtimes of MLSolver and COOL on the formulas described
above; 
we let COOL and MLSolver solve games using the local
strategy improvement algorithm \verb|stratimprloc2| provided by PGSolver.
To solve games \emph{on-the-fly}
with COOL however, we use our own implementation of the fixpoint iteration algorithm, which in general
is slower than PGSolver but has the advantage that
it enables on-the-fly solving. With this option enabled, COOL constructs and solves the satisfiability games step by
step and finishes as soon as one of the players has a winning strategy in the partial game.
For COOL, we have conducted all experiments with
and without on-the-fly solving. For MLSolver, we also enabled the
optimizations \verb|-opt litpro| and \verb|-opt comp| (and refer to the resulting prover configuration as MLSolverOpt).
Tests have been run on a system with Intel Core i7 3.60GHz CPU with 16GB RAM.
A more detailed description of the results of the experiments as well as
binaries of a formula generator, the prover COOL and scripts that benchmark the various configurations of the provers
are available in a figshare repository at~\cite{HausmannEArep18}.

We observe that COOL without on-the-fly solving generally finishes faster than both MLSolver and MLSolverOpt
throughout all tested series of formulas (see Figures~\ref{fig:enpa_is_enba}--\ref{fig:earlygc});
the reason for this appears to be that the permutation games solved by COOL are of size $\mathcal{O}((nk)!)$, where $n\leq k$,
and hence asymptotically smaller than the Safra/Piterman games solved by MLSolver which are of size $\mathcal{O}(((nk)!)^2)$.
The size of the refutations for the formulas $\theta_1(n)$ and $\theta_2(n)$ is exponential in $n$ so that
on-the-fly solving does in fact \emph{increase} the runtimes of COOL (see Figures~\ref{fig:enpa_is_enba} and~\ref{fig:domgame}); basically, these formulas
cannot be decided early, and therefore any (necessarily unsuccessful) attempt to do so just consumes additional computation time.
The formulas $\mathsf{early}\text{-}\mathsf{ac}(n,4,2)$ and $\mathsf{early}\text{-}\mathsf{ac}_\mathsf{gc}(n,4,2)$,
on the other hand,
have refutations of size polynomial in $n$, and COOL appears to benefit from on-the-fly solving for these
formulas as it is able to decide them early (see Figures~\ref{fig:early} and \ref{fig:earlygc}). As mentioned above,
COOL uses our own unoptimized implementation of the fixpoint iteration algorithm~\cite{BruseEA14} for on-the-fly solving; 
while this implementation is slower than 
PGSolver's \verb|stratimprloc2| algorithm, the on-the-fly abilities of COOL seem to compensate this disadvantage for 
the $\mathsf{early}\text{-}\mathsf{ac}(n,4,2)$ and $\mathsf{early}\text{-}\mathsf{ac}_\mathsf{gc}(n,4,2)$ formulas
from $n=11$ and $n=8$ on, respectively.


\section{Conclusion}\label{section:conclusion}

We have presented a method to obtain
satisfiability games for the
\emph{weakly aconjunctive} $\mu$-calculus.
The game construction uses
determinization of \emph{limit-deterministic} parity automata, avoiding the full complexity of the
Safra/Piterman construction a) in the presentation of  the procedure and its correctness
proof and b) in the size of the obtained DPA
(which comes from $\lO((nk)!^2)$ to $\lO((nk)!)$).
The resulting permutation satisfiability
games for the weakly aconjunctive $\mu$-calculus
are of size $\lO((nk)!)$, have  
$\lO(nk)$ priorities, and yield a new bound of $\lO((nk)!)$ on the model size for this fragment. We have implemented
this decision procedure in coalgebraic generality and with support for
on-the-fly solving
as part of the coalgebraic satisfiability solver COOL; initial experiments show favourable results.\medskip


The datasets generated and analyzed during the current study are available in the figshare repository: 
 \url{https://doi.org/10.6084/m9.figshare.5919451}

\bibliography{coalgml}

\appendix

\newpage

\section{Omitted Proofs and Lemmas}

Given an infinite run $\rho$,
we denote by $\mathsf{pre}(\rho,i)$ the run $\rho(0),\ldots,\rho(i)$ and
by $\mathsf{post}(\rho,i)$ the run $\rho(i+1),\rho(i+2),\ldots$.
Similarly, $\mathsf{pre}(w,i)$ and $\mathsf{post}(w,i)$ denote the
words $w(0),\ldots, w(i)$ and $w(i+1),w(i+2)\ldots$, respectively. The
concatenation of two runs $\rho$, $\kappa$ (two words $w_1$, $w_2$) is
denoted by $\rho;\kappa$ ($w_1;w_2$). 

\subsection*{Proof of Theorem~\ref{thm:edba}}

Put $n=|W|$. The number of partial permutations
over the set $Q$ of size $q$ is 
\begin{align*}
|\mathsf{pperm}(Q)|\leq\sum\limits_{i=0}^{q}\frac{q!}{(q-i)!}\leq q! \sum\limits_{i=0}^{\infty}\frac{1}{i!}=q! e;
\end{align*} hence we have
$|W|\leq 2^{n-q}\cdot q!e$. As $q\leq n$, we have that for $n>3$,
$2^{|n-q|}\cdot q!\leq n!$ so that the claimed bound follows. 
For all $t\in\delta'$, we have by definition of $\alpha$ that $1\leq \alpha(t)\leq 2(q-1)+3=2q+1\leq 2n+1$,
i.e. $\mathcal{B}$ has at most $2n+1$ priorities.
It remains to show that $\mathcal{A}$ and $\mathcal{B}$ are equivalent.

Let $w\in L(\mathcal{A})$, i.e. let there be an accepting
run $\sigma\in\mathsf{run}(\mathcal{A},w)$. 
By limit-determinism of $\mathcal{A}$, there is an $i$ s.t. for all $j\geq i$, $\delta|_{\sigma(j),w(j)}\cap Q=\{\mathsf{trans}(\sigma)(j)\}$.
As $\sigma$ is accepting, $\mathsf{Inf}(\mathsf{trans}(\sigma))\cap F\neq\emptyset$. To see that the run 
$\rho=\mathsf{run}(\mathcal{B},w)$ is accepting,
we have to show that the highest priority in $\mathsf{Inf}(\alpha\circ\mathsf{trans}(\rho))$ is even. 
For $j'\geq i$, we have $\rho(j')=(U_{j'},l_{j'})$,
and $Q\ni\sigma(j')=l_{j'}(k)$ for some $k$, i.e. from $i$ on, the corresponding state
from $\sigma$ is contained in the permutation components of $\rho$. 
Since $\sigma$ is infinite (and deterministic from $i$ on), 
$\sigma(j')$ can be tracked forever so that the corresponding position in the permutation component
never changes to $*$ in step 1. of the construction. 
If there is a position $m<k$ for which there is some $j''\geq i$ s.t. $l_{j''}(m)$ changes to $*$ in step 1. 
of the construction, then $k$ is changed to $k-1$, i.e. the considered state moves to the left in the permutation component; this can only happen finitely often without removing the considered state, thus eventually no 
position to the left of the considered state in the permutation changes to $*$;
the considered state is stationary from then on.
Now let $o$ and $k'$ be suitable numbers with $\sigma(o')=l_{o'}(k')$ for all $o'\geq o$;
from $o$ on, $k'$ is the position that tracks the run $\sigma$
and no position with index less than or equal to $k'$ is removed from the permutation component after $o$
and hence for all $q\geq o$, $r_q>k'$, where $r_q$ is the lowest index that turns to $*$ during steps 1. and 2.
of the transition $\mathsf{trans}(\rho)(q)$; i.e. all such transitions with odd priority have priority at most $2(n-k')+1$.
As $\mathsf{Inf}(\mathsf{trans}(\sigma))\cap F\neq \emptyset$, position
$k'$ is active infinitely often, i.e. there are infinitely many $q\geq o$ with $\mathsf{trans}(\sigma)(q)\in F$ and
$\alpha(\mathsf{trans}(\rho)(q))=2(n-k')+2$. Thus $\rho$ is accepting.

Conversely, let $w\in L(\mathcal{B})$, i.e. let the run 
$\rho=\mathsf{run}(\mathcal{B},w)$ be accepting and
let $p$ be the highest priority in $\mathsf{Inf}(\alpha\circ\mathsf{trans}(\rho))$. 
Then $p$ is even and there is some $i$ s.t. for all $j\geq i$, 
$\alpha(\mathsf{trans}(\rho)(j))\leq p$ and there are infinitely many $j\geq i$ with 
$\alpha(\mathsf{trans}(\rho)(j))=p$. Let $\rho(j)=(U_j,l_j)$; abusing notation by interpreting $l_j$ as a set, we observe
 $U_j\cup l_j\subseteq \delta(v_0,\mathsf{pre}(w,j))$, i.e. every state in $U_j$ and $l_j$ can be reached in 
$\mathcal{A}$ from $v_0$ via the word $\mathsf{pre}(w,j)$.
We consider the position $k$ with $2(q-k)+2=p$. For all $j\geq i$ let
$r_j$ denote the lowest index that turns to $*$ in the transition
$((U_j,l_j),w(j),(U_{j+1},l_{j+1}))$; furthermore let $a_j$ denote the
lowest number s.t. $(l_{j}(a_j),w(j),l_{j}(a_j))\in F$ (if no such numbers exists, then
put $r_j=q+1$ and $a_j=q+1$, respectively).
As for all $j\geq i$, $\alpha(\mathsf{trans}(\rho)(j))\leq p=2(q-k)+2$ we always have $r_j>k$
and $a_j\geq k$ since
otherwise $\alpha(\mathsf{trans}(\rho)(j))=2(q-r_j)+3\geq 2(q-k)+3$ or
$\alpha(\mathsf{trans}(\rho)(j))=2(q-a_j)+2>2(q-k)+2$.
Thus the $k$-th component is -- from position $i$ on -- 
stationary in the permutations and infinitely often active. As $l_i(k)\in \delta(v_0,\mathsf{pre}(w,i))$, there is a 
finite run 
$\kappa\in \mathsf{run}_f(\mathcal{A},\mathsf{pre}(w,i))$ with $|\kappa|=i+1$, $\kappa(0)=v_0$ and $\kappa(i)=l_i(k)$.
By definition, $l_i(k)\in Q$ and since $\mathcal{A}$ is limit-deterministic,
there is just a single run $\rho\in\mathsf{run}(\mathcal{A},l_i(k),\mathsf{post}(w,i))$. We note $\rho\subseteq Q$.
Since there are infinitely many $j\geq i$ with $\alpha(\mathsf{trans}(\rho)(j))=2(q-k)+2$, $a_j=k$ and hence $l_j(k)\in F$, we have 
$\mathsf{Inf}(\mathsf{trans}(\rho))\cap F\neq \emptyset$. Thus $\kappa;\rho\in\mathsf{run}(\mathcal{A},w)$ is accepting. \hfill\qed

\subsection*{Proof of Lemma~\ref{lem:npanba}}
The bound on the size of $\mathcal{D}$ follows trivially. Let $w\in L(\mathsf{A})$, i.e. let there
be a run $\rho\in\mathsf{run}(\mathcal{C},w)$ with $\max(\mathsf{Inf}(\alpha\circ\mathsf{trans}(\rho)))$ even.
Then there are $i$ and $l$ s.t. 
$\alpha(\mathsf{trans}(\rho)(i))=2l$, $\alpha(\mathsf{trans}(\rho)(j))\leq 2l$ for all 
$j\geq i$ and there are infinitely many $j\geq i$ with $\alpha(\mathsf{trans}(\rho)(j))=2l$. 
We observe that $\rho(i)\in\delta'(u_0,\mathsf{pre}(w,i))$.
Let $\sigma\in\mathsf{run}(\mathcal{C},\mathsf{pre}(w,i))$ be a witness to this, i.e. a run with $\sigma(i)=\rho(i)$.
We have $(\rho(i+1),l)\in\delta'(\rho(i),w(i))$. As for all $j\geq i$, $\alpha(\mathsf{trans}(\rho)(j))\leq 2l$,
the run $\kappa=\mathsf{run}(\mathcal{D},(\rho(i+1),l),\mathsf{post}(w,i+1))$ is infinite. As 
there are infinitely many $j\geq i$ with $\alpha(\mathsf{trans}(\rho)(j))=2l$, 
$\mathsf{Inf}(\mathsf{trans}(\kappa))\cap F\neq\emptyset$. 
Thus $\sigma;\kappa\in\mathsf{run}(\mathcal{D},w)$ is accepting.

Let $w\in L(\mathcal{D})$, i.e. let there be an accepting run 
$\rho\in\mathsf{run}(\mathcal{D},w)$. Since $\rho$ is accepting, 
there is a number $i$ s.t. $\rho(i)=(v,l)$ for some $l$ and there are 
infinitely many $j\geq i$ with $\mathsf{trans}(\rho)(j)\in F\cap V_{l}$, where
$V_l=\{(v,l)\in W\}$. Since $\rho$ is infinite, 
we have for all $j\geq i$ that $\alpha(v(j),w(j),v(j+1))\leq 2l$, where
$\mathsf{trans}(\rho)(j)=((v(j),l),w(j),(v(j+1),l))$.
Also, there are infinitely many $j\geq i$ with $\mathsf{trans}(\rho)(j)\in F$ and hence $\alpha(v(j),w(j),v(j+1))=2l$, where 
$\mathsf{trans}(\rho)(j)=((v(j),l),w(j),(v(j+1),l))$.
We thus can define a run $\tau\in\mathsf{run}(\mathcal{C},w)$ by putting $\tau(j)=\rho(j)$ for
$j<i$ and $\tau(j)=v_j$ for $j\geq i$ where $\rho(j)=(v_j,l)$. As $\max(\mathsf{Inf}(\mathsf{trans}(\rho)))=2l$,
$\tau$ is accepting.\hfill\qed

\subsection*{Full Proof of Lemma~\ref{lem:tracking_automaton}}
In detail, we put $V=\FLphi$ and recall that any letter from $\Sigma_{\phi}$
identifies a rule application and a conclusion
of the rule application. 
For $\psi\in\FLphi$ and $a=(R,i,\theta)\in\Sigma_{\phi}$, 
we put $\delta(\psi,a)=\{\psi\}$ if $\theta\neq\psi$ and $R\neq(\langle a\rangle)$,
$\delta(\psi,a)=\emptyset$ if $R=(\lightning)$ or $\theta\neq\psi$ and $R\neq(\langle a\rangle)$,
and $\delta(\psi,a)=R(\psi,i)$ if $\theta=\psi$.
Here, $R(\psi,i)$ denotes the set of formulas that $\psi$
changes to when rule $R$ is being applied to it and the $i$-th conclusion
is selected.
If $\ad(\phi)$ is odd, then put $k=\ad(\phi)$,
otherwise put $k=\ad(\phi)+1$; then $k$ is odd.
The priority function $\alpha$ is defined as $\alpha(\psi_1,a,\psi_2)=k-\al(\psi_2)$, for $(\psi_1,a,\psi_2)\in\delta$.
The bounds on the size and index of $\mathcal{A}$ follow.
Since $\phi$ is weakly aconjunctive, all conclusions of rule applications to deferrals 
contain at most one deferral, in particular, for a deferral
$\psi=\psi_0\wedge (\Diamond \psi_1\wedge\ldots \wedge\Diamond\psi_n\wedge\Box(\psi_1\vee\ldots\vee\psi_n))$ with $\mathsf{al}(\psi)=l$, $\delta(\psi,((\wedge),\psi,0))=\{\psi_0,\Diamond \psi_1\wedge\ldots \wedge\Diamond\psi_n\wedge\Box(\psi_1\vee\ldots\vee\psi_n)\}$; between this rule application and the next application of modal rules, we consider $\theta=\Diamond \psi_1\wedge\ldots \wedge\Diamond\psi_n\wedge\Box(\psi_1\vee\ldots\vee\psi_n)$ to be a single compound formula to which no more propositional rules can be applied. Upon the next application of modal rules, each application of a modal rule chooses just one of the $\psi_i$ which needs to be tracked; 
thus we have that for all $a\in\Sigma_{\phi}$, $|\delta|_{\theta,a}|\cap{\alpha_\leq (k)}\leq 1$ and 
since $\psi_0$ contains no active $\mu$-variables
and hence $\mathsf{al}(\psi_0)=0$, $|\delta|_{\psi,a}|\cap{\alpha_\leq (k)}\leq 1$.
Thus $\mathcal{A}$ indeed is limit-deterministic. 
We also have $L(\mathcal{A})\cap\mathsf{Branch}(\phi)=\mathsf{BadBranch}(\phi)$:
To see $L(\mathcal{A})\cap\mathsf{Branch}(\phi)\supseteq\mathsf{BadBranch}(\phi)$, we show that 
$\mathcal{A}$ accepts every bad branch in a pre-tableau for $\phi$. We know that every bad branch induces the list $w\in\Sigma_{\phi}^\omega$ 
of rule applications and selections of conclusions that encode the branch.
Since the branch contains a $\mu$-thread, the automaton can guess the corresponding
formula and follow the single deferral through the thread; this defines a limit-deterministic 
run $\rho\in\mathsf{run}(\mathcal{A},w)$. To see that $\rho$ is accepting, it remains
to show that $\max(\mathsf{Inf}(\alpha\circ\mathsf{trans}(\rho)))$ is even. 
This follows since the tracked thread is a $\mu$-thread, i.e.
we have a formula $\psi$ with odd alternation level $l$ such that $\psi$ occurs infinitely
often in the thread and no formula with lower alternation level than $l$ occurs. As both $k$ and $l$ are odd,
$\max(\mathsf{Inf}(\alpha\circ\mathsf{trans}(\rho)))=k-l$ is even, as required. 
For the converse direction, we have to show that every word that
is accepted by the automaton and encodes a branch encodes a bad branch. So let $w\in L(\mathcal{A})$ encode a branch;
then there is a limit-deterministic accepting run $\rho\in\mathsf{run}(\mathcal{A},w)$, i.e. there 
is some $i$ such that for all $j\geq i$, 
we have $\alpha(\mathsf{trans}(\rho)(j))<l$ for some even $l$ and 
$\delta|_{\rho(j),w(j)}\cap\alpha_\leq{l}=\{\mathsf{trans}(\rho)(j)\}$. We observe
that $\rho(i)$ is a deferral that can be tracked forever through the branch that is encoded by $w$, i.e.
$\rho$ is a thread and $w$ contains $\rho$. 
Let $m=k-l$; as $k$ is odd and $l$ is even, $m$ is odd. Since $\rho$ is accepting, there are infinitely many $j'>j$ with
$\al(\rho(j'))=m$ but there is no $j'>j$ with $\al(\rho(j'))>m$, i.e. the least alternation level
to which the tracked formula evolves infinitely often is odd. Thus $\rho$ is a $\mu$-thread. \qed


\end{document}